\newtheorem{thm}{Theorem}[section]
\newtheorem{lemma}{Lemma}[section]
\newtheorem{obs}{Observation}[section]
\newtheorem{claim}{Claim}[section]
\journal{}
\begin{document}
\begin{frontmatter}

%% Title, authors and addresses

%% use the tnoteref command within \title for footnotes;
%% use the tnotetext command for theassociated footnote;
%% use the fnref command within \author or \address for footnotes;
%% use the fntext command for theassociated footnote;
%% use the corref command within \author for corresponding author footnotes;
%% use the cortext command for theassociated footnote;
%% use the ead command for the email address,
%% and the form \ead[url] for the home page:
%% \title{Title\tnoteref{label1}}
%% \tnotetext[label1]{}
%% \author{Name\corref{cor1}\fnref{label2}}
%% \ead{email address}
%% \ead[url]{home page}
%% \fntext[label2]{}
%% \cortext[cor1]{}
%% \affiliation{organization={},
%%             addressline={},
%%             city={},
%%             postcode={},
%%             state={},
%%             country={}}
%% \fntext[label3]{}
\title{Algorithmic Results for Weak Roman Domination Problem in Graphs}

%% use optional labels to link authors explicitly to addresses:
%% \author[label1,label2]{}
%% \affiliation[label1]{organization={},
%%             addressline={},
%%             city={},
%%             postcode={},
%%             state={},
%%             country={}}
%%
%% \affiliation[label2]{organization={},
%%             addressline={},
%%             city={},
%%             postcode={},
%%             state={},
%%             country={}}
\author[inst1]{Kaustav Paul\corref{mycorrespondingauthor}}
\cortext[mycorrespondingauthor]{Corresponding author}
\ead{kaustav.20maz0010@iitrpr.ac.in}

\author[inst1]{Ankit Sharma}
\ead{ankit.23maz0013@iitrpr.ac.in}

\author[inst1]{Arti Pandey\footnote{Research of Arti Pandey is supported by CRG project, Grant Number-CRG/2022/008333, Science
and Engineering Research Board (SERB), India}}
\ead{arti@iitrpr.ac.in}

\affiliation[inst1]{%%organization={Department One},%Department and Organization
            addressline={Department of Mathematics, Indian Institute of Technology Ropar}, 
            city={Rupnagar},
            postcode={140001}, 
            state={Punjab},
            country={India}}

%\author[inst2]{}

%\affiliation[inst2]{%Department and Organization
            %addressline={The Institute of Mathematical Sciences}, 
            %city={Chennai},
            %postcode={600113}, 
            %state={Tamil Nadu},
            %country={India}}

\begin{abstract}
Consider a graph $G = (V, E)$ and a function $f: V \rightarrow \{0, 1, 2\}$. A vertex $u$ with $f(u)=0$ is defined as \emph{undefended} by $f$ if it lacks adjacency to any vertex with a positive $f$-value. The function $f$ is said to be a \emph{Weak Roman Dominating function} (WRD function) if, for every vertex $u$ with $f(u) = 0$, there exists a neighbour $v$ of $u$ with $f(v) > 0$ and a new function $f': V \rightarrow \{0, 1, 2\}$  defined in the following way: $f'(u) = 1$, $f'(v) = f(v) - 1$, and $f'(w) = f(w)$, for all vertices $w$ in $V\setminus\{u,v\}$; so that no vertices are undefended by $f'$. The total weight of $f$ is equal to $\sum_{v\in V} f(v)$, and is denoted as $w(f)$. The \emph{Weak Roman Domination Number} denoted by $\gamma_r(G)$, represents $min\{w(f)~\vert~f$ is a WRD function of $G\}$. For a given graph $G$, the problem of finding a WRD function of weight $\gamma_r(G)$ is defined as the \emph{Minimum Weak Roman domination problem}. The problem is already known to be NP-hard for bipartite and chordal graphs. In this paper, we further study the algorithmic complexity of the problem. 
We prove the NP-hardness of the problem for star convex bipartite graphs and comb convex bipartite graphs, which are subclasses of bipartite graphs. In addition, we show that for the bounded degree star convex bipartite graphs, the problem is efficiently solvable. We also prove the NP-hardness of the problem for split graphs, a subclass of chordal graphs. On the positive side, we give polynomial-time algorithms to solve the problem for $P_4$-sparse graphs. Further, We have presented some approximation results.

\end{abstract}

%%Graphical abstract

%%Research highlights

\begin{keyword}
%% keywords here, in the form: keyword \sep keyword
Weak Roman Dominating function \sep Bipartite graphs \sep Split graphs \sep $P_4$-sparse graphs \sep Graph algorithms \sep NP-completeness

\end{keyword}

\end{frontmatter}
%% \linenumbers

%% main text
\section{Introduction}
\label{sec:1}
 Cockayne et al.~\cite{COCKAYNE200411}, introduced the concept of \emph{Roman Dominating function} (RDF) for a graph $G$. Given a graph $G=(V,E)$, RDF is a function $f : V \rightarrow \{0, 1, 2\}$ that satisfies the following condition: for every vertex $u$ with $f(u) = 0$, there exists a vertex $v$ adjacent to $u$ with $f(v) = 2$. 
 %In the context of a real-valued function $f$ mapping from a set $V$ to the set of real numbers $\mathbb{R}$, 
 The weight of $f$  is defined as the summation of the function values over all elements in $V$, and is denoted as $w(f)$. Furthermore, for a subset $S$ of $V$, we define $f(S)$ as the summation of the function values over all elements in $S$. 
 %Hence, the weight of $f$ may also be expressed as $f(V)$. 
 The \emph{Roman Domination number}, represented as $\gamma_R(G)$, is defined as $min\{w(f)~\vert~f$ is a RDF of
$G\}$. A function that represents an RDF of weight $\gamma_R(G)$ is referred as a $\gamma_R(G)$-function. Extensive research has been conducted on the topic of Roman domination on graphs, as evidenced by the work referenced in \cite{weak_rom_1,DBLP:journals/dmgt/Henning02,COCKAYNE200411}.

The impetus for formulating this description of a Roman dominating function stemmed from an article authored by Ian Stewart in Scientific American, named ``Defend the Roman Empire!" \cite{stewart1999defend}. Every vertex inside the graph corresponds to a distinct geographical region within the historical context of the Roman Empire. A place, denoted as node $v$, is classified as \emph{unsecured} if there are no legions stationed there, that is $f(v) = 0$. Conversely, a place is deemed \emph{secured} if there are one or two legions stationed there, represented by $f(v)\in \{1, 2\}$. One possible method of securing an insecure area, (or a vertex $v$), is by deploying a legion from a neighbouring location. In the fourth century A.D., Emperor Constantine the Great issued an edict stipulating that the transfer of a legion from a fortified position to an unfortified one is prohibited if such an action would result in leaving the latter unsecured. Therefore, it is necessary to have two legions positioned at a specific place ($f(v) = 2$) in order to afterward deploy one of the legions to a neighbouring place. Through this approach, Emperor Constantine the Great was able to provide a robust defense for the Roman Empire. Due to the considerable costs associated with maintaining a legion in a specific area, the Emperor aims to strategically deploy the minimum number of legions necessary to safeguard the Roman Empire.

The notion of \emph{Weak Roman domination} was initially introduced by Henning et al. \cite{weak_rom_1} to reduce the expense associated with the placement of legions. This problem involves easing some limitations that are imposed in the Roman Domination problem. Consider a graph $G=(V, E)$ and a function $f$ defined as $f : V \rightarrow \{0, 1, 2\}$. Let $V_0$, $V_1$, and $V_2$ denote the sets of vertices that are allocated the values $0$, $1$, and $2$, respectively, under the function $f$. A vertex $u \in V_0$ is said to be \emph{undefended} by $f$ (or simply \emph{undefended} if the function $f$ is evident from the context), if it is not adjacent to any vertex in $V_1\cup V_2$.

The function $f$ is referred to as a \emph{Weak Roman dominating function} (\emph{WRD function}) if, for every vertex $u$ in the set $V_0$, there exists a neighbour $v$ of $u$ in the set $V_1 \cup V_2$ and a function $f': V \rightarrow \{0, 1, 2\}$ defined in the following way: $f'(u) = 1$, $f'(v) = f(v) - 1$, and $f'(w) = f(w)$ for all vertices $w$ in the set $V\setminus\{u,v\}$, such that no vertices are undefended by $f'$. We define the weight $w(f)$ of $f$ to be $\sum_{v\in V}f(v)$. The \emph{Weak Roman domination number}, denoted as $\gamma_r(G)$, is the minimum weight of a WRD function in $G$; that is, $\gamma_r(G) = min\{w(f) |f$ is a WRD function in $G\}$. A WRD function of weight $\gamma_r(G)$ is denoted as a $\gamma_r(G)$-function. 
\begin{figure}[h!]
    \centering
    \includegraphics[width=0.98\linewidth]{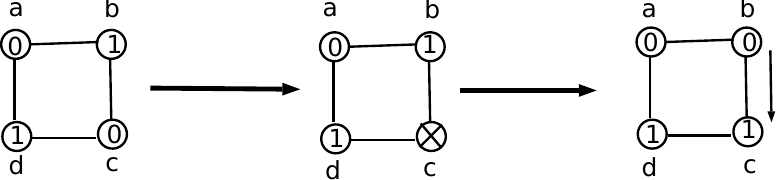}
    \caption{Example of a WRD function on $C_4$}
\label{fig:1}
\end{figure}

For example refer to Figure \ref{fig:1}, a graph $G=C_4$ is considered, where $V(G)=\{a,b,c,d\}$ and $E(G)=\{ab,bc,cd,da\}$. Consider the function $f:V(G)\rightarrow\{0,1,2\}$, defined as follows : $f(a)=f(c)=0$ and $f(b)=f(d)=1$. There is no undefended vertex by $f$. Now let an unsecured vertex be attacked (without loss of generality, say $c$; the ``crossed" vertex in figure \ref{fig:1}). The legion placed at $b$ can move to $c$ and the new function $f'$ ($f'(d)=f'(c)=1$ and $f'(a)=f'(b)=0$) is formed; but still $G$ has no vertex that is undefended by $f'$. Hence $f$ is a WRD function of $G$.

The optimization and decision version of the problem are the following:\\

\noindent\underline{\textsc{Minimum Weak Roman domination} problem (MIN-WRD)}
\begin{enumerate}
  \item[] \textbf{Instance}: A graph $G=(V,E)$.
  \item[] \textbf{Solution}: A WRD function $f$ of $G$ with $w(f)=\gamma_{r}(G)$.
\end{enumerate}

\noindent\underline{\textsc{Weak Roman domination decision} problem (DECIDE-WRD)}
\\
[-13pt]
\begin{enumerate}
  \item[] \textbf{Instance}: A graph $G=(V,E)$ and a positive integer $k\leq \vert V\vert$.
  \item[] \textbf{Question}: Does there exists a WRD function $f$ of $G$ with weight $w(f)\leq k$?
\end{enumerate}

\subsection{Notations and definitions}\label{subsec1.3}
This paper only considers simple, undirected, finite and nontrivial graphs. Let $G=(V,E)$ be a graph. $n$ and $m$ will be used to denote the cardinalities of $V$ and $E$, respectively. $N(v)$ stands for the set of neighbors of a vertex $v$ in $V$. The number of neighbors of a vertex $v\in V$ defines its \emph{degree}, which is represented by the symbol $deg(v)$. The maximum degree of the graph will be denoted by $\Delta$. For a set $U\subseteq V$, the notation $deg_{U}(v)$ is used to represent the number of neighbors that a vertex $v$ has within the subset $U$. Additionally, we use $N_{U}(v)$ to refer to the set of neighbors of vertex $v$ within $U$.  A vertex $u$ is called a \emph{private neighbor} of $v$ with respect to $S\subseteq V$ of $v$, if $N[u] \cap S = \{v\}$.
The set $pn(v; S)=N[v]-N[S\setminus {v}]$ of all private neighbours of $v$ with respect to $S$ is called the \emph{private neighbor set} of $v$ with respect to $S$.

A vertex of degree one is known as a \emph{pendant vertex}. A set $I\subseteq V$ is called an \emph{independent set} if no two vertices of $I$ are adjacent. A set $S\subseteq V$ is said to be a \emph{dominating set} if every vertex of $V\setminus S$ is adjacent to some vertex of $S$. A graph 
$G$ is said to be a \emph{complete graph} if any two vertices of $G$ are adjacent. A set $S\subseteq V$ is said to be a clique if the subgraph of $G$ induced on $S$ is a complete graph.$[n]$ denotes the set $\{1,2,\ldots,n\}$ for every non negative integer $n$. 

The \emph{join} of two graphs $G_{1}$ and $G_{2}$ refers to a graph formed by taking separate copies of $G_{1}$ and $G_{2}$ and connecting every vertex in $V(G_{1})$ to each vertex in $V(G_{2})$ using edges. The symbol $\oplus$ will denote the join operation. Similarly, \emph{disjoint union} of two graphs $H_1$ and $H_2$ is the graph $H=(V(H_1)\cup V(H_2), E(H_1)\cup E(H_2))$. The disjoint union will be denoted with the symbol $\cup$.

A vertex $v$ of a graph $G=(V,E)$ is said to be a \emph{universal vertex} if $N[v]=V$. A path with $n$ vertices is denoted as $P_n$. A \emph{comb} 
is constructed by adding a pendant vertex to each vertex of a path $P_n$. 

A graph $G=(V,E)$ is said to be $P_4$-sparse if a subgraph induced on any $5$ vertices of $G$ contains at most one $P_4$. A \emph{spider} is a graph $G=(V,E)$, where $V$ admits a partition in three subsets $S,C$ and $R$ such that
\begin{itemize}
     \item $C=\{c_1,\ldots,c_l\}~(l\geq 2)$ is a clique.
    \item $S=\{s_1,\ldots,s_l\}$ is a stable set.
    \item Every vertex in $R$ is adjacent to every vertex in $C$ and nonadjacent to all vertex of $S$.
\end{itemize}
A spider $G(S,C,R)$ is said to be a \emph{thin spider} if for every $i\in [l]$, $N(s_i)=\{c_i\}$ and it is called a \emph{thick spider} if for every $i\in [l]$, $N(s_i)=C\setminus\{c_i\}$. 

A graph $G=(V,E)$ is said to be \emph{bipartite} if $V(G)$ can be partitioned into two independent sets $X$ and $Y$. A bipartite graph $G=(X\cup Y,E)$ is called a \emph{complete bipartite graph} or a \emph{bi-clique} if every vertex of $X$ is adjacent to every vertex of $Y$. A complete bipartite graph $G=(X\cup Y,E)$ is called \emph{star} if any of $X$ or $Y$ has cardinality one.

A bipartite graph $G=(X\cup Y,E)$ is \emph{comb convex} (or \emph{star convex}) if a tree $T=(X,F)$ can be defined, such that $T$ is a comb (or a star) and for every vertex $y\in Y$, the neighbourhood of $y$ induces a subtree in $T$.

A graph is \emph{split} if it can be partitioned in an independent set and a clique.

\subsection{Existing Literature}\label{subsec1.1}
Henning et al. introduced the idea of weak roman domination in \cite{weak_rom_1}. In the same paper, they have provided the following bound for $\gamma_r(G)$ in terms of $\gamma(G)$: $\gamma(G)\leq\gamma_r(G)\leq 2\gamma(G)$ and characterization of graphs is given in which lower bound is attained. Additionally, it presents characterizations of forests in which the upper bound $\gamma_r(G)=2\gamma(G)$ is met. It has also been proved that the problem the DECIDE-WRD is NP-complete, even when restricted to bipartite and chordal graphs. On the positive side, they have shown that the MIN-WRD can be solved in linear time for trees. 

Liu et al. further studied the MIN-WRD problem, and proved that the problem can be solved in polynomial-time for block graphs \cite{liu2010weak}. Chapelle et al. proved that the MIN-WRD problem can be solved for general graphs with a time complexity of $O^*(2^n)$, accompanied by an exponential space complexity. The authors have also presented an algorithm with a time complexity of $O^*(2.2279^n)$, which takes polynomial space. In addition they proposed a polynomial algorithm to solve the MIN-WRD problem in interval graphs \cite{weak_rom_2}. To the best of our current understanding, no further algorithmic advancements are made regarding the weak roman domination problem.

\subsection{Our Results}\label{subsec1.2}
The subsequent sections of this work are organized in the following manner: 

\begin{itemize}
    \item In Section \ref{SEC:2}, a polynomial-time algorithm has been proposed for solving the MIN-WRD problem for $P_4$-sparse graphs.
    
    \item In Section \ref{SEC:3}, the DECIDE-WRD problem is shown to be NP-complete for comb convex bipartite graphs.

    \item In Section \ref{SEC:4}, the DECIDE-WRD problem is shown to be NP-complete for star convex bipartite graphs. In \cite{weak_rom_1}, it has been shown that the DECIDE-WRD problem is NP-complete for bipartite graphs, but using the same reduction, it can also be observed that the same problem is also NP-complete for bounded degree bipartite graphs.  We have shown that the MIN-WRD problem is polynomial-time solvable for bounded degree star convex bipartite graphs.

    \item In Section \ref{SEC:5}, the DECIDE-WRD problem is shown to be NP-complete for split graphs.

   % \item In section \ref{SEC:6}, the DECIDE-WRD problem is shown to be NP-complete for bisplit graphs. 
    
    \item In Section \ref{SEC:7}, a $2(1+ln(\Delta+1))$ approximation algorithm is given for the MIN-WRD problem. In the same section we have shown the APX-completeness of the problem for graphs with maximum degree $4$. Section \ref{SEC:8} concludes our work.

\end{itemize}

\section{Algorithm for $P_4$-sparse graphs}
\label{SEC:2}
In this section, we give a polynomial-time algorithm to solve MIN-WRD for $P_4$-sparse graphs. The class of $P_4$-sparse graphs can be considered as a generalization of the class of cographs. Below we present a characterization theorem for $P_4$-sparse graphs.

\begin{thm}\cite{Defn_P4sparse}\label{th:5}
    A graph $G$ is said to be $P_4$-sparse if and only if one of the following conditions hold
    \begin{itemize}
        \item $G$ is a single vertex graph.
        \item $G=G_1\cup G_2$, where $G_1$ and $G_2$ are $P_4$-sparse graphs.
        \item $G=G_1 \oplus G_2$, where $G_1$ and $G_2$ are $P_4$-sparse graphs.
        \item $G$ is a spider which admits a spider partition $(S, C, R)$ where either $G[R]$ is a $P_4$-sparse graph or $R=\phi$.
    \end{itemize} 
\end{thm}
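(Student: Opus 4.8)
The plan is to prove both implications, exploiting the fact that $P_4$ is self-complementary and hence that being $P_4$-sparse is preserved under complementation: since $\overline{P_4}\cong P_4$, a set of five vertices induces at most one $P_4$ in $G$ exactly when it does in $\overline{G}$. This duality makes the disjoint-union and join cases mirror each other, and it matches the parallel/series symmetry of modular decomposition, which is the framework I would work in throughout.

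For the \textbf{``if''} direction I would verify that each construction preserves $P_4$-sparseness. The single-vertex graph is vacuously $P_4$-sparse. For $G=G_1\cup G_2$, every induced $P_4$ is connected and therefore lies entirely inside one of $G_1,G_2$; given any five vertices $W=W_1\cup W_2$ with $W_i\subseteq V(G_i)$, at most one of $W_1,W_2$ can contain four or more vertices, so all induced $P_4$'s of $G[W]$ live in a single $G_i[W_i]$ with $|W_i|\le 5$, and $P_4$-sparseness of that $G_i$ yields at most one $P_4$. The join case $G=G_1\oplus G_2$ then follows immediately by complementation, since $\overline{G}=\overline{G_1}\cup\overline{G_2}$ and both complements are $P_4$-sparse. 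The spider case needs a direct check: in a spider $(S,C,R)$ the set $R$ is a module whose external neighbourhood is exactly $C$, so any induced $P_4$ either avoids $R$ (and is then confined to the rigid legs $S\cup C$, where the thin/thick pattern admits no extra $P_4$) or meets $R$ in a controlled way, and one argues that sparseness of $G[R]$ propagates to $G$.

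For the \textbf{``only if''} direction I would induct on $|V(G)|$ and split on connectivity. If $G$ is disconnected it is the disjoint union of its components, each an induced subgraph of $G$ and hence $P_4$-sparse, giving the second case. If $\overline{G}$ is disconnected, the complementation duality puts us in the join case. The remaining, and genuinely hard, situation is when both $G$ and $\overline{G}$ are connected, i.e. $G$ is prime for modular decomposition; here I must show $G$ is a spider.

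The main obstacle is precisely this prime case. The strategy: a prime graph on at least four vertices contains an induced $P_4$, and the sparseness hypothesis forces this $P_4$ to be essentially unique on any five vertices, which rigidly constrains how the remaining vertices attach to it. I would use this rigidity to carve $V(G)$ into the stable set $S$, the clique $C$, and the remainder $R$, check that each $s_i$ is joined to $C$ in the thin or thick pattern, and verify that $R$ is a module adjacent to all of $C$. Showing the partition is forced --- that no vertex can attach irregularly without producing a second $P_4$ among some five vertices --- is the delicate combinatorial core, after which the inductive hypothesis applied to $G[R]$ closes the recursion.
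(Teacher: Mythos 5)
The paper does not prove this statement at all: Theorem \ref{th:5} is imported verbatim from the literature (\cite{Defn_P4sparse}, the Jamison--Olariu structure theorem for $P_4$-sparse graphs), so your attempt must stand entirely on its own, and as written it does not. Your easy pieces are sound: the single-vertex case, the disjoint-union case (a $5$-set splits as $W_1\cup W_2$ with at most one $|W_i|\ge 4$, and every induced $P_4$ is connected), and the join case via self-complementarity of $P_4$ together with $\overline{G_1\oplus G_2}=\overline{G_1}\cup\overline{G_2}$ are all correct. But two substantive gaps remain. First, even in the ``if'' direction the spider case is asserted rather than proved: the claim that a $P_4$ meeting $R$ does so ``in a controlled way'' needs an actual case analysis of how a $5$-set can intersect $S$, $C$, $R$ (using that $R$ is completely joined to $C$ and anticomplete to $S$, and the thin/thick adjacency between $S$ and $C$); this is routine but not done. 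Second, and decisively, the ``only if'' direction's hard case --- showing that a $P_4$-sparse graph $G$ with both $G$ and $\overline{G}$ connected is a spider --- is exactly the content of the cited theorem, and you explicitly defer it (``carve $V(G)$ into $S$, $C$, $R$''; ``the delicate combinatorial core''). A plan to exploit rigidity around an induced $P_4$ is the right instinct, but without executing the carving argument you have restated the theorem, not proved it.

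There is also a structural error in your case split: you identify the remaining case (``both $G$ and $\overline{G}$ connected'') with $G$ being \emph{prime} for modular decomposition, but these are not the same. A spider with head $|R|\ge 2$ is connected and co-connected yet not prime, since $R$ is a nontrivial module; conversely, a prime $P_4$-sparse graph is forced to be a spider with $|R|\le 1$. So if you genuinely restricted attention to prime graphs you could never produce the nonempty-head case of the theorem, and the connected/co-connected graphs with nontrivial modules would be left uncovered. The standard repair is to pass to the quotient by the maximal strong modules (which is prime in this case), show the quotient is a headless spider plus possibly one head vertex, verify that the only module that can be blown up without creating a second $P_4$ in some $5$-set is the head, and then apply your induction to $G[R]$; your sketch needs this intermediate layer spelled out before the induction can close.
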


Hence by Theorem \ref{th:5}, a connected graph that is $P_4$-sparse and contains at least two vertices can be classified as either a join of two $P_4$-sparse graphs or a particular type of spider (thick or thin). In this section, we discuss the Weak Roman domination number for both of these cases. 

\begin{lemma}\label{lem:1}
    Let $G=(V,E)$ be a headless thin spider ($R=\phi$ in the spider partition $(S,C,R)$), then, $\gamma_{r} (G)$ is equal to $\vert C\vert$.
\end{lemma}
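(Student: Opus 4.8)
The plan is to prove the equality by establishing matching lower and upper bounds, both exploiting the rigid structure forced by $R=\phi$: here $V=C\cup S$ splits into $l$ disjoint pairs $\{c_i,s_i\}$, where $C$ is a clique and each $s_i$ is a pendant vertex whose unique neighbour is $c_i$.

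For the lower bound $\gamma_r(G)\geq |C|$, I would use a simple pairing argument. Let $f$ be any WRD function. I claim $f(c_i)+f(s_i)\geq 1$ for every $i\in[l]$. Indeed, if both were $0$, then $s_i\in V_0$ and its only neighbour $c_i$ lies in $V_0$ as well, so $s_i$ is undefended by $f$ with no neighbour in $V_1\cup V_2$ available to rescue it; this contradicts $f$ being a WRD function. Since the pairs $\{c_i,s_i\}$ partition $V$, summing gives $w(f)=\sum_{i=1}^{l}\bigl(f(c_i)+f(s_i)\bigr)\geq l=|C|$, so every WRD function has weight at least $|C|$.

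For the upper bound $\gamma_r(G)\leq |C|$, I would exhibit the explicit function $f$ with $f(c_i)=1$ and $f(s_i)=0$ for all $i\in[l]$, of weight exactly $l$, and verify it is a WRD function. No vertex is undefended by $f$, since each $s_i$ is adjacent to $c_i\in V_1$ and each clique vertex already has positive value. The substantive check is the response to an attack on some $s_i$: take $v=c_i$ and define $f'$ by $f'(s_i)=1$, $f'(c_i)=0$, leaving all other values unchanged. I then verify $f'$ has no undefended vertex — the newly emptied $c_i$ stays defended because, as $C$ is a clique and $l\geq 2$, it is adjacent to some $c_j$ ($j\neq i$) with $f'(c_j)=1$; every other pendant $s_j$ retains its guard $c_j$; and $s_i$ itself now has positive value. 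Combining the two bounds yields $\gamma_r(G)=|C|$.

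I do not expect a serious obstacle here, as the thin-spider-with-$R=\phi$ structure is very constrained. The only point requiring care is the upper-bound verification, specifically confirming that the single legion move responding to an attacked pendant keeps the entire graph defended; this step relies essentially on the hypothesis $l\geq 2$ built into the spider partition, which guarantees a second clique vertex to cover $c_i$ after its legion departs.
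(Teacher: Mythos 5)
Your proof is correct, but it takes a genuinely different route from the paper's on both halves. For the lower bound, the paper simply invokes the known inequality $\gamma(G)\leq\gamma_r(G)$ from Henning et al.\ together with the fact that $\gamma(G)=\vert C\vert$ for a headless thin spider, whereas you prove the bound from scratch via the pairing argument $f(c_i)+f(s_i)\geq 1$ over the partition of $V$ into the pairs $\{c_i,s_i\}$; your version is self-contained and in effect re-derives the domination lower bound for this specific structure, at the cost of a little extra work. For the upper bound, the paper places the legions on the stable set ($f(s_i)=1$ for all $i$, zero elsewhere) and asserts the WRD property without detailed verification, while you place them on the clique ($f(c_i)=1$ for all $i$) and check the attack response explicitly; both functions have weight $\vert C\vert$ and both work, and your explicit verification of the post-move configuration is arguably more careful than the paper's one-line ``clearly.'' One small remark: your closing claim that the verification \emph{relies essentially} on $l\geq 2$ is an overstatement --- after the legion moves from $c_i$ to $s_i$, the emptied vertex $c_i$ is adjacent to $s_i$ itself, which now carries value $1$, so $c_i$ is defended even without appealing to a second clique vertex $c_j$. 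This does not affect correctness (the spider definition guarantees $l\geq 2$ anyway), but the argument is robust without that hypothesis.
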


\begin{proof}
Consider a graph $G=(V,E)$ that is a thin spider. Let the spider partition of $G$ be represented as $(S,C,R)$, where $R$ is an empty set. In \cite{weak_rom_1}, it is proved that $\gamma(G)\leq \gamma_r(G)$. Note that for the graph $G$, $\gamma(G)=\vert C\vert$. Now it is enough to show the existence of a WRD function with weight equal to $\vert C\vert$ in order to establish the lemma. We define a function $f:V\rightarrow \{0,1,2\}$, as follows: for any element $u$ in the set $S$, $f(u)$ equals $1$, and for any other element $u$, $f(u)$ equals $0$. Clearly, $f$ is a WRD function and $w(f)=\vert C\vert$.
\end{proof}

\begin{lemma}\label{lem:2}
    Let $G=(V,E)$ be a thin spider with a nonempty head, ($R\neq\phi$ in the spider partition $(S,C,R)$). Then, $\gamma_{r} (G)$ is equal to $\vert C\vert+1$.
\end{lemma}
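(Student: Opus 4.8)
The plan is to prove the equality by establishing the matching lower and upper bounds $\gamma_r(G)\ge |C|+1$ and $\gamma_r(G)\le |C|+1$ separately. For the upper bound I will exhibit an explicit WRD function of weight $|C|+1$; for the lower bound I will argue that no WRD function can attain weight $|C|$, which is the value reached in the headless case of Lemma \ref{lem:1}.

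For the lower bound I first record a local constraint coming from the pendant structure of a thin spider. Since $N(s_i)=\{c_i\}$, the only vertices that can defend $s_i$ are $s_i$ and $c_i$, so any WRD function $f$ must satisfy $f(s_i)+f(c_i)\ge 1$ for every $i\in[l]$ (otherwise $s_i$ has value $0$ with no positively valued neighbour and cannot even be defended). Summing over $i$ gives $f(S)+f(C)\ge l=|C|$, hence $w(f)\ge |C|$. To upgrade this to $|C|+1$, I will suppose for contradiction that $w(f)=|C|$; then necessarily $f(R)=0$ and $f(s_i)+f(c_i)=1$ for every $i$, so for each $i$ exactly one of $s_i,c_i$ carries a single legion. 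Now I use the head: pick any $r\in R$, which exists because $R\neq\emptyset$. Since $f(r)=0$ and $N(r)=C$, defending $r$ forces moving a legion from some $c_j$ with $f(c_j)=1$; but $f(c_j)=1$ forces $f(s_j)=0$, and after the move $c_j$ drops to $0$, leaving the pendant $s_j$ (whose only neighbour is $c_j$) undefended by the resulting $f'$. This contradiction yields $\gamma_r(G)\ge |C|+1$.

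For the upper bound I will define $f$ by $f(c_1)=2$, $f(s_i)=1$ for $i\in\{2,\dots,l\}$, and $f\equiv 0$ elsewhere, so that $w(f)=2+(l-1)=|C|+1$. Verifying that $f$ is a WRD function is a routine case check over the value-$0$ vertices $s_1$, $c_2,\dots,c_l$, and the vertices of $R$: each $s_i$ with $i\ge 2$ is guarded by its own legion, while $s_1$ and every $c_i$ are dominated by $c_1$, and every $r\in R$ is dominated by $c_1$. The one point that genuinely needs care — and the reason I assign $2$ rather than $1$ to $c_1$ — is the defence of an attacked head vertex $r$: the only reachable legion sits on $c_1$, and after it slides onto $r$ the vertex $c_1$ must still retain a positive value so that the remaining vertices of $R$ (which need not be mutually adjacent) stay dominated; weight $2$ on $c_1$ guarantees exactly this. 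Analogous one-step checks dispose of attacks on $s_1$ and on the $c_i$'s.

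The main obstacle is the lower bound, specifically recognizing that the presence of the head $R$ forces a defence to "borrow" a legion from the clique $C$, and that this borrowing propagates to leave a pendant $s_j$ undefended; isolating this interaction, rather than merely counting, is what rules out weight $|C|$. The upper bound is then just a matter of placing the single extra legion so that the head stays defendable after a move, which is precisely why the doubled vertex $c_1$ appears.
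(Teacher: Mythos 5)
Your proof is correct and follows essentially the same route as the paper: an explicit weight-$(|C|+1)$ witness for the upper bound, and for the lower bound the same key observation that a weight-$|C|$ function forces $f(s_i)+f(c_i)=1$ for all $i$ and $f(R)=0$, so that defending an attacked $r\in R$ drains some $c_j$ with $f(c_j)=1$ and strands the pendant $s_j$. The only (immaterial) differences are that you derive the constraint $f(s_i)+f(c_i)\geq 1$ directly from the pendant structure rather than citing $\gamma(G)\leq\gamma_r(G)$ with $\gamma(G)=|C|$, and your upper-bound witness places weight $2$ on one clique vertex and $1$ on the stable vertices $s_2,\dots,s_l$, whereas the paper puts all the weight on $C$ ($f(c_i)=2$ for one fixed $i$, $f(c_j)=1$ for the rest) --- both verify routinely.
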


\begin{proof}
    Consider a graph $G=(V,E)$ that is a thin spider. Let the spider partition of $G$ be represented as $(S,C,R)$, where $R$ is a nonempty set, and let $\vert C\vert=\vert S\vert=k$. Note that for the graph $G$, $\gamma(G)=\vert C\vert$. Now to complete the proof of this lemma we show the following two statements:
    \begin{itemize}
        \item[i.] There exists a WRD function $f$ on $G$ with $w(f)=\vert C\vert +1$.
        \item[ii.] $\gamma(G)<\gamma_r(G)$.
    \end{itemize}
    
    Initially, we establish a WRD function with a weight of $\vert C\vert +1$. Consider the function $f:V\rightarrow \{0,1,2\}$, defined as follows: $f(c_i)=2$ for a fixed $i\in [k]$ and $f(c_j)=1$ for all $c_j$ in the set $C\setminus \{c_i\}$. For any other element $u$, $f(u)$ is equal to $0$. It can be observed that the function $f$ exhibits the property of being a WRD function with a weight of $\vert C\vert+1$.

    Now we show that $\gamma(G)<\gamma_{r}(G)$. For sake of contradiction, assume that $\gamma(G)=\gamma_{r}(G)=\vert C\vert$. Then there exists a WRD function of $G$ with $w(f)=\gamma(G)=\vert C\vert$. In this case we can say that $f(c_j)+f(s_j)=1$ holds for all $j\in [k]$. This statement suggests that $f(r)$ equals zero for every $r$ in the set $R$. Therefore, an attack on a certain vertex $r$ belonging to the set $R$ can be defended by sending a legion from a certain element $c_j$ from the set $C$, where $f(c_j)$ equals $1$ (that means one legion is placed at $c_j$). However, following the defense, the function is transformed into $f'$, where $f'(r)=1$, $f'(c_j)=0$, and $f(u)=f'(u)$ for all $u\in V\setminus\{c_j,r\}$. This results in the vertex $s_j\in S$ being left undefended, which contradicts the initial assumption. Therefore, the value of $\gamma_r(G)$ is equal to $\vert C\vert+1$. 
\end{proof}

\begin{lemma}\label{lem:3}
    Let $G=(V,E)$ be a thick spider with an empty head, ($R=\phi$ in the spider partition $(S,C,R)$). Then, $\gamma_{r} (G)=2$ if $\vert C\vert=2$ and  $\gamma_{r} (G)=3$, for $\vert C\vert \geq 3$.
\end{lemma}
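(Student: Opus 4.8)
The plan is to treat the two regimes separately, with $|C|=2$ being essentially immediate and $|C|\geq 3$ requiring a short construction together with a lower-bound argument.

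For $|C|=2$, I would first observe that a headless thick spider with $C=\{c_1,c_2\}$ and $S=\{s_1,s_2\}$ has edge set $\{c_1c_2,\,s_1c_2,\,s_2c_1\}$ (since $N(s_i)=C\setminus\{c_i\}$), so $G$ is isomorphic to the path $P_4$ on $s_1,c_2,c_1,s_2$. Assigning $f(c_1)=f(c_2)=1$ and $0$ elsewhere gives a WRD function of weight $2$: the two endpoints $s_1,s_2$ are each dominated, and an attack on either can be absorbed by sliding the adjacent legion outward while the remaining legion keeps the whole path defended. Together with $\gamma_r(G)\geq\gamma(G)=2$, this yields $\gamma_r(G)=2$.

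For $|C|\geq 3$, for the upper bound I would exhibit the function $f(c_1)=2$, $f(c_2)=1$, and $f\equiv 0$ elsewhere, of weight $3$, and verify it is WRD. The positive vertices $\{c_1,c_2\}$ dominate $G$ (two clique vertices together cover all of $S$, because each $c_i$ misses only $s_i$). For any attack I would describe the legion move explicitly: an attacked $c_k$ or an attacked $s_k$ adjacent to $c_1$ is defended by moving one legion out of $c_1$, which still leaves $c_1$ positive; and $s_1$, the unique $S$-vertex not adjacent to $c_1$, is defended from $c_2$ while $c_1$ retains value $2$. In each resulting function the positive set again dominates $G$, so no vertex is left undefended.

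The hard part is the lower bound $\gamma_r(G)\geq 3$ for $|C|\geq 3$, which I would establish by ruling out every WRD function of weight at most $2$. Since $G$ has no universal vertex (each $c_i$ misses $s_i$, and each $s_i$ misses $c_i$ together with all other $s_j$), we have $\gamma(G)=2$; as the positive vertices of any WRD function must dominate $G$, this immediately excludes weight $1$ and excludes the weight-$2$ assignment placing a single $2$ on one vertex. It remains to handle $f$ with $f(x)=f(y)=1$ on two distinct vertices $x,y$. Here I would first classify the $2$-element dominating sets: computing $N[c_i]=C\cup(S\setminus\{s_i\})$ and $N[s_i]=\{s_i\}\cup(C\setminus\{c_i\})$ shows the only possibilities are $\{c_i,c_j\}$ and $\{c_i,s_i\}$ (a pair of $S$-vertices dominates only when $|C|=2$). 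For each such set I would then produce an unrepairable attack, using $|C|\geq 3$ to select an uninvolved index: if the positive set is $\{c_i,c_j\}$, attack $s_k$ with $k\neq i,j$, whose only positive neighbours are $c_i$ and $c_j$, and note that whichever legion moves, the new positive set is a pair $\{c,s\}$ with mismatched indices, which fails to dominate some $S$-vertex; if the positive set is $\{c_i,s_i\}$, attack $s_j$ with $j\neq i$, whose only positive neighbour is $c_i$, after which the positive set $\{s_i,s_j\}$ leaves $s_k$ ($k\neq i,j$) undefended. This is the crux, and the main point to verify carefully is that in each branch the defender genuinely has no alternative legion to relocate, so the undefended vertex cannot be avoided; combined with the weight-$3$ construction this gives $\gamma_r(G)=3$.
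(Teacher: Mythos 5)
Your proposal is correct and follows essentially the same route as the paper: an explicit weight-$3$ construction for $|C|\geq 3$ together with a case analysis showing every weight-$2$ assignment admits an unrepairable attack (your attacks on $s_k$ for the pair $\{c_i,c_j\}$ and on $s_j$ for the pair $\{c_i,s_i\}$ are exactly the paper's Cases 1 and 2, and your domination-based classification of candidate pairs subsumes its Case 3). If anything, your version is slightly more complete, since you explicitly rule out weight $1$, a single vertex of value $2$, and mismatched pairs $\{c_i,s_j\}$ with $i\neq j$ via the observation that the positive vertices of a WRD function must form a dominating set, cases the paper leaves implicit.
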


\begin{proof}
    Consider a graph $G=(V,E)$ that is a thick spider, such that $R=\phi$. Let $\vert C\vert=\vert S\vert=k$. If $k=2$, then $G$ is a $P_4$, hence $\gamma_r(G)=2$. For each integer $k$ greater than or equal to $3$, we show the existence of a WRD function $f$ such that the weight of $f$ is equal to $3$. The function $f:V\rightarrow \{0,1,2\}$ is defined as follows : $f(c_i)=2$ (for a fixed $c_i\in C$), $f(s_i)=1$ (for $s_i\in S$) and $f(u)=0$ for every $u\in V\setminus\{c_i,s_i\}$. It can be observed that the function $f$ exhibits the properties of being a WRD function and $w(f)=3$.

    Next to show that $\gamma_r(G)=3$ for $\vert C\vert\geq 3$, we show that $\gamma_r(G)$ can not be $2$. Assuming $\gamma_r(G)=2$ and $\vert C\vert\geq 3$, the following cases can be considered.
    
    \noindent\textbf{Case 1:} In this case we assume that $f(c_i)=f(c_j)=1$ for some fixed $c_i,c_j\in C$, and $f(u)=0$ for every $u\in V\setminus\{c_i,c_j\}$. If the attack is on the vertex $s_b$ (where $s_b$ is not a member of the set $\{s_i,s_j\}$), it can be assumed without loss of generality that the legion stationed at $c_i$ will be required to move to $s_b$, resulting in the vertex $s_j$ being undefended. Therefore, a contradiction emerges.

   \noindent\textbf{Case 2:} In the second case, the function $f$ is defined so that $f(c_i) = f(s_i) = 1$ for some $c_i$ in the set $C$ and corresponding $s_i$ in the set $S$. Additionally, $f(u) = 0$ for all elements $u$ in the set $V\setminus\{c_i,s_i\}$. If the attack is on $s_j$ (where $s_j$ is not equal to $s_i$), it necessitates the movement of the legion stationed at $c_i$ to $s_j$, resulting in the vertex $s_b$ being undefended ($s_b$ being an element of $S\setminus\{s_i,s_j\}$). This gives rise to a contradiction.

   \noindent\textbf{Case 3:} The last case being, there exists a WRD function $f$ defined as: $f(s_i)=f(s_j)=1$ for some fixed $s_i,s_j\in S$, and $f(u)=0$ for every $u\in V\setminus\{s_i,s_j\}$. Now there exists a vertex $s_b\in S$ (where $b\notin \{i,j\}$) as $\vert S\vert\geq 3$; hence $s_b$ is undefended by $f$, which contradicts the fact that $f$ is a WRD function.

   Consequently, it follows that $\gamma_r(G)$ is greater than or equal to $3$, which in turn implies that $\gamma_r(G)=3$.
 \end{proof}

 \begin{lemma}\label{lem:4}
    Let $G=(V,E)$ be a thick spider with a nonempty head, ($R\neq\phi$ in the spider partition $(S,C,R)$). Then, $\gamma_{r} (G)$ is equal to $3$.
\end{lemma}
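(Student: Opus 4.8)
The plan is to establish matching upper and lower bounds. For the upper bound I would reuse the construction from the empty-head thick-spider case: fix a clique vertex $c_i$, set $f(c_i)=2$, $f(s_i)=1$, and $f\equiv 0$ elsewhere, giving weight $3$. Since $c_i$ is adjacent to every other clique vertex, to every head vertex (each $r\in R$ is joined to all of $C$), and to every $s_j$ with $j\neq i$ (because $N(s_j)=C\setminus\{c_j\}\ni c_i$), the only vertex missed by $c_i$ is $s_i$, which carries its own legion. I would then verify the genuine WRD condition: any attacked $V_0$ vertex $u$ is a neighbour of $c_i$, so a legion can be shifted $c_i\to u$; after this shift $c_i$ still holds value $1$ and remains adjacent to every remaining $V_0$ vertex, so the resulting $f'$ leaves nothing undefended. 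Hence $\gamma_r(G)\le 3$.

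For the lower bound I would rule out any WRD function of weight $\le 2$. The first observation is that the set $P=V_1\cup V_2$ of positive vertices of any WRD function must be a dominating set, since an undefended $V_0$ vertex violates the definition outright. A quick structural check shows $G$ has no universal vertex ($c_i\not\sim s_i$, $s_i\not\sim r$, and $r\not\sim S$), so every dominating set has size at least $2$; therefore a weight-$\le 2$ WRD function must place value $1$ on exactly two vertices $u,v$ with $\{u,v\}$ dominating. I would then enumerate the size-$2$ dominating sets. Using that a size-$2$ set containing no clique vertex cannot dominate both all of $S$ (whose vertices are dominated only by clique vertices or themselves) and the nonempty $R$, I conclude that $P$ must contain a clique vertex, and the only dominating completions are $\{c_i,c_j\}$ with $i\neq j$ and $\{c_i,s_i\}$.

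The crux — and the place where the nonempty head $R$ does the work that $|C|\ge 3$ did in the empty-head case — is to defeat these two configurations by attacking a vertex $r\in R$. Since $r$ is adjacent only to clique vertices among $P$, defending the attack forces a legion off a clique vertex. For $\{c_i,c_j\}$, moving $c_i\to r$ strands $s_j$ (its only potential defenders lie in $C\setminus\{c_j\}$, now devoid of the surviving legion at $c_j$), and symmetrically moving $c_j\to r$ strands $s_i$; for $\{c_i,s_i\}$ the only positive neighbour of $r$ is $c_i$, and moving $c_i\to r$ strands every $s_k$ with $k\neq i$, which exists since $l\ge 2$. In each case $f'$ has an undefended vertex, contradicting the WRD property, so no weight-$2$ function works and $\gamma_r(G)\ge 3$. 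Combining both directions gives $\gamma_r(G)=3$. I expect the main obstacle to be making the size-$2$ dominating-set enumeration airtight, in particular arguing cleanly that the head forces a clique vertex into $P$, after which the attack-on-$r$ argument is short.
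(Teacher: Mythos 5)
Your proof is correct and follows essentially the same route as the paper: the same weight-$3$ upper-bound function ($f(c_i)=2$, $f(s_i)=1$, zero elsewhere), and for the lower bound the same decisive move of attacking a head vertex $r\in R$ to defeat the configurations $\{c_i,c_j\}$ and $\{c_i,s_i\}$. If anything, your preliminary step --- observing that the positive vertices of any WRD function must form a dominating set and then enumerating the size-$2$ dominating sets --- is tidier than the paper's three-case analysis, since it also disposes of configurations the paper passes over silently (a single vertex of value $2$, pairs inside $S$ or inside $R$, and $\{c_i,s_j\}$ with $j\neq i$), all of which fail already at the domination stage.
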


\begin{proof}
    Consider a graph $G=(V,E)$ that is a thick spider, such that $R\neq\phi$. Let $\vert C\vert=\vert S\vert=k$. At first, we will show the existence of a WRD function $f$ such that the weight of $f$ is equal to $3$. Define the function $f:V\rightarrow \{0,1,2\}$ as follows: for some fixed $i\in [k]$, $f(c_i)=2$ and $f(s_i)=1$, where $c_i\in C$ and $s_i\in S$ and $f(u)=0$ for every $u\in V\setminus\{c_i,s_i\}$. It can be observed that the function $f$ exhibits the properties of being a WRD function and $w(f)=3$. If we assume that $\gamma_r(G)=2$, we can analyse three different possibilities. 
     
     \noindent\textbf{Case 1:} In the first scenario, it is seen that $f(c_i)=f(c_j)=1$ for two fixed $c_i$ and $c_j$ belonging to the set $C$, whereas $f(u)=0$ for all $u$ in the set $V$ excluding $c_i$ and $c_j$. In the event of an attack on a vertex $r\in R$, it is reasonable to suppose, without loss of generality, that the legion positioned at $c_i$ will need to relocate to $r$. Consequently, this relocation will leave the vertex $s_j$ undefended. Hence, a contradiction arises.

     \noindent\textbf{Case 2:} In the second scenario, the function $f$ is defined so that $f(c_i) = f(s_i) = 1$ for a fixed $c_i$ in the set $C$ and corresponding $s_i$ in the set $S$. Additionally, $f(u) = 0$ for all elements $u$ in the set $V\setminus\{c_i,s_i\}$. In the event that an attack is on $r\in R$, it becomes imperative to relocate the legion now stationed at $c_i$ to $r$. Consequently, this relocation leaves all the vertices of $S\setminus\{s_i\}$ undefended. This results in a contradiction.

     \noindent\textbf{Case 3:} The third scenario involves $f(r)=1$ and $f(u)=1$, for some fixed $r\in R$ and $u\in C\cup S$ and $f(v)=0$ for every $v\in V\setminus\{r,u\}$. Assume that there exists an element $c_i$ in the set $C$ such that $f(c_i)=1$. However, it follows that the element $s_i$ remains undefended. Likewise, in the case where $f(s_i)=1$ for a fixed $s_i\in S$, it follows that all vertices in $S$ except for $s_i$ remain undefended. Therefore, this statement presents a contradiction. Hence $\gamma_r(G)=3$.
\end{proof}

Now the only case that remains to be analyzed is the case when $G$ is a join of two $P_4$-sparse graphs $G_1$ and $G_2$. In this case, $\gamma(G)\leq 2$. Since $\gamma(G)\leq 2 \gamma_r(G)$ \cite{weak_rom_1}; $\gamma_r(G)\leq 4$. Now it is easy to observe that for a complete graph $G$, $\gamma_r(G)=1$ and vice versa. So from now on, we will consider that $G$ is not a complete graph. Hence, it is enough to characterize $G$ for $\gamma_r(G)=2,3$. Here, we are stating a theorem from \cite{weak_rom_1}, which will be useful in upcoming proofs.

\begin{thm}\cite{weak_rom_1}\label{th:6}
    For any graph $G$, $\gamma(G)=\gamma_r(G)$ if and only if there exists a minimum dominating set $S$ of $G$, such that
    \begin{enumerate}
        \item[(a)] $pn(v; S)$ induces a clique for every $v \in S$,
        \item[(b)] for every vertex $u \in V(G)\setminus S$ that is not a private neighbor of any vertex of $S$, there exists a vertex $v \in S$ such that $pn(v; S) \cup \{u\}$ induces a clique.
    \end{enumerate}
\end{thm}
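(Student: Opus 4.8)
The plan is to prove both directions by reducing the weak Roman condition to a purely combinatorial statement about a minimum dominating set. First I would record the observation that pins down the shape of every weight-$\gamma(G)$ solution: if $f$ is a WRD function then $V_1\cup V_2$ must be a dominating set, so $w(f)=|V_1|+2|V_2|\ge \gamma(G)+|V_2|$; hence $w(f)=\gamma(G)$ forces $V_2=\emptyset$ and $S:=V_1$ to be a minimum dominating set on which $f$ takes the value $1$ everywhere. Consequently $\gamma_r(G)=\gamma(G)$ holds if and only if there is a minimum dominating set $S$ for which the function $f_S$ (value $1$ on $S$ and $0$ elsewhere) is a WRD function. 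I would then translate ``$f_S$ is WRD'' into domination language: defending an attack on $u\in V\setminus S$ means moving a legion from some neighbour $v\in S$ to $u$, and the resulting assignment is free of undefended vertices exactly when $(S\setminus\{v\})\cup\{u\}$ is again a dominating set. Since the only vertices that can lose their unique dominator when $v$ is deleted lie in $pn(v;S)$, this is equivalent to the condition $(\ast)$: for every $u\in V\setminus S$ there is $v\in N_S(u)$ with $pn(v;S)\subseteq N[u]$. Everything then reduces to relating $(\ast)$ to conditions (a) and (b).

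For the forward direction I assume $\gamma_r(G)=\gamma(G)$, fix $S$ as above so that $(\ast)$ holds, and deduce (a) and (b). For (a), take distinct $x,y\in pn(v;S)$; a private neighbour other than $v$ lies in $V\setminus S$ and has $N_S(\cdot)=\{v\}$, so $(\ast)$ applied to $x$ must use $v$ itself and yields $pn(v;S)\subseteq N[x]$, whence $y\in N[x]$ and $x\sim y$; the case $y=v$ is immediate since $x\in N[v]$. For (b), a vertex $u$ that is not a private neighbour of any vertex of $S$ satisfies $|N_S(u)|\ge 2$, and $(\ast)$ gives $v\in N_S(u)$ with $pn(v;S)\subseteq N[u]$; combining this with the clique property from (a) and noting $u\notin pn(v;S)$ shows $pn(v;S)\cup\{u\}$ induces a clique.

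The converse is the delicate part, and here I expect the main obstacle. Starting from a minimum dominating set $S$ satisfying (a) and (b), one would like to verify $(\ast)$ directly, but it is genuinely \emph{not} true that (a) and (b) for $S$ force $f_S$ to be a WRD function: (b) may supply a vertex $v$ witnessing the clique condition for some non-private $u$ while $v$ is not adjacent to $u$, so no legion can be moved from $v$ and $u$ stays undefended. The remedy is an exchange argument. When $u$ violates $(\ast)$ one checks that $u$ is non-private, that (b) produces a $v$ with $u\not\sim v$, and that minimality of $S$ makes $pn(v;S)\ne\emptyset$; choosing $x\in pn(v;S)$ gives $x\sim u$ and $x\sim v$, and by (a) we have $pn(v;S)\subseteq N[x]$, so $S'=(S\setminus\{v\})\cup\{x\}$ is again a minimum dominating set in which $u$ gains the neighbour $x$ it needs to be defended.

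The hard part is to carry out such replacements globally. I would set up a monovariant — for instance the number of vertices violating $(\ast)$, or an extremal choice of minimum dominating set minimising this count — and argue that a single exchange strictly decreases it without creating new violations, so that the process terminates at a minimum dominating set satisfying $(\ast)$. Proving that the swap does not introduce fresh undefended vertices (in particular that $v$, the remaining members of $pn(v;S)$, and the former private neighbours of $x$ all stay defendable) and that the chosen potential really decreases is exactly where the real work lies. Once $(\ast)$ holds for some minimum dominating set $S'$, the function $f_{S'}$ is a WRD function of weight $|S'|=\gamma(G)$, giving $\gamma_r(G)\le\gamma(G)$, and since $\gamma_r(G)\ge\gamma(G)$ always holds, equality follows.
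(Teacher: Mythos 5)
First, a point of comparison: the paper itself contains no proof of Theorem~\ref{th:6} --- it is imported verbatim from \cite{weak_rom_1} --- so your attempt can only be measured against the statement itself, not against an in-paper argument. On that footing, your setup is correct and clean: any WRD function $f$ of weight $\gamma(G)$ has $V_1\cup V_2$ dominating, so $w(f)=|V_1|+2|V_2|\geq \gamma(G)+|V_2|$ forces $V_2=\emptyset$ and $S=V_1$ a minimum dominating set; and $f_S$ is WRD exactly when your condition $(\ast)$ holds, since moving a legion from $v$ to $u$ leaves undefended precisely the vertices of $pn(v;S)\setminus N[u]$. Your forward direction is complete and correct, including the observation that $pn(v;S)\setminus\{v\}\subseteq V\setminus S$ with $N_S(x)=\{v\}$ for each such $x$, which pins the $(\ast)$-witness to $v$ itself. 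Your diagnosis of the converse is also genuinely right, and worth stressing: conditions (a) and (b) for a given $S$ do \emph{not} imply $(\ast)$ for that same $S$. A concrete witness: take $V=\{v,v_1,v_2,x,p_1,p_2,u\}$ with edges $vx,\,vv_1,\,v_1p_1,\,v_1u,\,v_2p_2,\,v_2u,\,ux$; then $S=\{v,v_1,v_2\}$ is a minimum dominating set, $pn(v;S)=\{x\}$, $pn(v_1;S)=\{p_1\}$, $pn(v_2;S)=\{v_2,p_2\}$, so (a) and (b) hold (the witness for the non-private vertex $u$ being $v$, with $x\sim u$), yet $(\ast)$ fails at $u$ because $p_1\not\sim u$ and $p_2\not\sim u$. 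So any proof of sufficiency must pass to a \emph{different} minimum dominating set, as you say.

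The gap is that you never actually make that passage. Your exchange step $S\mapsto S'=(S\setminus\{v\})\cup\{x\}$ with $x\in pn(v;S)$ is a sensible move, and the easy part does go through (by (a), $pn(v;S)\subseteq N[x]$, so $S'$ is again a dominating set of cardinality $\gamma(G)$). But everything that makes the argument a proof is left as a wish: you do not show that $S'$ still satisfies (a) and (b), nor that the swap creates no new $(\ast)$-violations, nor that any proposed potential (number of violating vertices, or an extremal choice of $S$) strictly decreases --- and you explicitly concede that this ``is exactly where the real work lies.'' This is not a cosmetic omission: after the swap, the private neighbourhood of \emph{every} vertex of $S$ can change (vertices formerly dominated only by $v$ may now be privately dominated by $x$, and vertices in $N[x]$ may cease to be private to their old dominators), so ``no fresh violations'' is far from automatic, and an iterative repair scheme could a priori cycle. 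As it stands, one full direction of the equivalence --- that (a) and (b) imply $\gamma(G)=\gamma_r(G)$ --- is unproven. The standard way to close this is not iteration but an extremal argument: among all minimum dominating sets satisfying (a) and (b) choose one optimizing a suitable quantity, assume a vertex $u$ violates $(\ast)$, and derive from the swap a contradiction with the extremal choice; carrying out that bookkeeping on the private neighbour sets is the substance of the theorem, and it is missing from your proposal.
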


Hence by Theorem \ref{th:6}, the following observation follows,
\begin{obs}\label{obs:2}
    Let $G=G_1\oplus G_2$, where $G_1$ and $G_2$ are $P_4$-sparse graphs. Then $\gamma_r(G)=2$ if and only if one of the following conditions satisfies
    \begin{enumerate}
        \item[(i)] $\gamma(G)=1$ and $G$ is not a complete graph.
        \item[(ii)] $\gamma(G)=2$ and there exists a minimum dominating set $S$ of $G$ that satisfies condition $(a)$ and $(b)$ of Theorem \ref{th:6}.
    \end{enumerate}
\end{obs}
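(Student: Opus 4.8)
The plan is to run a case analysis on $\gamma(G)$. By the discussion preceding the observation, the join structure forces $\gamma(G)\le 2$, and since we have already reduced to the case where $G$ is not complete, the equivalence ``$\gamma_r(G)=1$ iff $G$ is complete'' gives $\gamma_r(G)\ge 2$. Thus $\gamma(G)\in\{1,2\}$, and the only question is whether the weight can be pushed down to exactly $2$. I would prove both directions of the biconditional at once by treating each possible value of $\gamma(G)$ separately and showing which of conditions $(i)$ and $(ii)$ is forced in each case.

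First I would dispose of the case $\gamma(G)=1$. Here $G$ has a universal vertex $v$, and I claim $\gamma_r(G)=2$ precisely when $G$ is not complete, i.e.\ condition $(i)$. For the construction, set $f(v)=2$ and $f(w)=0$ for all $w\neq v$; since $v$ is universal, any attacked vertex $u$ can be defended by moving a legion from $v$ to $u$, and in the resulting $f'$ the vertex $v$ still carries weight $1$ and dominates every remaining $0$-vertex, so no vertex is undefended by $f'$. Hence $f$ is a WRD function of weight $2$, giving $\gamma_r(G)\le 2$; combined with $\gamma_r(G)\ge 2$ (as $G$ is not complete) this yields $\gamma_r(G)=2$. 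Conversely, if $\gamma(G)=1$ and $\gamma_r(G)=2$, then $\gamma_r(G)\neq 1$ forces $G$ to be noncomplete, so $(i)$ holds. The conceptual point to flag is that this branch lies \emph{outside} the reach of Theorem \ref{th:6}: here $\gamma_r(G)=2>1=\gamma(G)$, so no minimum dominating set can satisfy $(a)$ and $(b)$, which is exactly why $(i)$ must appear as a separate alternative rather than being absorbed into Theorem \ref{th:6}.

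Next I would handle $\gamma(G)=2$. In this regime $\gamma_r(G)\ge\gamma(G)=2$, so $\gamma_r(G)=2$ is equivalent to the equality $\gamma_r(G)=\gamma(G)$. By Theorem \ref{th:6}, this equality holds if and only if there is a minimum dominating set $S$ of $G$ satisfying conditions $(a)$ and $(b)$, which is verbatim condition $(ii)$. This direction is therefore a direct invocation of Theorem \ref{th:6} and requires no further construction.

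Finally I would assemble the two cases into the stated biconditional, checking that $(i)$ and $(ii)$ are disjoint (they fix incompatible values of $\gamma(G)$) and that together they cover every situation in which $\gamma_r(G)=2$. The only genuinely non-routine step is verifying that the single ``$2$'' placed on a universal vertex is a WRD function; everything else is bookkeeping built on the bound $\gamma(G)\le\gamma_r(G)\le 2\gamma(G)$, the characterization $\gamma_r(G)=1\iff G$ complete, and Theorem \ref{th:6}. I expect the main obstacle to be purely expository, namely keeping the split on $\gamma(G)$ clean and making explicit why the $\gamma(G)=1$ branch cannot be folded into the hypotheses of Theorem \ref{th:6}.
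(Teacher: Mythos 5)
Your proposal is correct and takes essentially the same route the paper intends: the paper states the observation without a separate proof, as an immediate consequence of Theorem \ref{th:6} together with the preceding remarks that the join structure forces $\gamma(G)\le 2$ and that $\gamma_r(G)=1$ if and only if $G$ is complete. Your case split on $\gamma(G)\in\{1,2\}$ --- with the universal-vertex weight-$2$ construction (or, even more directly, the bound $\gamma_r(G)\le 2\gamma(G)$) handling $\gamma(G)=1$, and a verbatim invocation of Theorem \ref{th:6} handling $\gamma(G)=2$ --- simply writes out in full the argument the paper leaves implicit, so there is nothing to correct.
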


Hence in the following lemma, we will only consider those graphs $G$, which are not a complete graph and do not satisfy the conditions $(i)$ and $(ii)$ of Observation \ref{obs:2}.

\begin{lemma}\label{lem:5}
    Let $G=G_1\oplus G_2$, where $G_1$ and $G_2$ are $P_4$-sparse graphs. Then $\gamma_r(G)=3$ if and only if one of the following conditions satisfies
    \begin{enumerate}
        \item[(i)] There exists one vertex in $u\in V(G)$ such that $G[V(G)\setminus N[u]]$ is a clique.
        \item[(ii)] There exists a set of two vertices $S=\{u_1,u_2\}\subseteq V(G^*)$ (where $G^*\in \{G_1,G_2\})$ such that $G^*[V(G^*)\setminus (N[u_1]\cup N[u_2])]$ is a clique.
    \end{enumerate}
\end{lemma}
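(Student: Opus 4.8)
The plan is to lean on the standing assumptions recorded just before the lemma: since $G$ is not complete and violates both conditions of Observation \ref{obs:2}, we have $\gamma_r(G)\ge 3$, while the preceding discussion gives $\gamma_r(G)\le 4$. Hence $\gamma_r(G)=3$ if and only if $G$ admits a WRD function of weight exactly $3$, and the whole argument reduces to characterizing when such a function exists. Because values lie in $\{0,1,2\}$, a weight-$3$ WRD function $f$ is of exactly one of two types: Type A, with one vertex of value $2$ and one of value $1$; and Type B, with three vertices of value $1$. I would show that Type A functions exist precisely when (i) holds and Type B functions exist precisely when (ii) holds, which together give the stated equivalence.

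For sufficiency I would argue constructively. Assuming (i), let $K=V(G)\setminus N[u]$ be the clique and set $f(u)=2$, $f(w)=1$ for an arbitrary $w\in K$; here $K\neq\emptyset$, since otherwise $u$ would be universal and $\gamma_r(G)=2$, contradicting the standing assumption. I would verify $f$ is WRD by checking two families of attacks: a vertex of $N(u)$ is defended by shifting the legion from $u$, and a vertex of $K\setminus\{w\}$ is defended by shifting the legion from $w$, the clique structure of $K$ guaranteeing nothing is left undefended. Assuming (ii), say with $G^\ast=G_1$, vertices $u_1,u_2$, and clique $K=V(G_1)\setminus(N[u_1]\cup N[u_2])$, I would set $f(u_1)=f(u_2)=1$ and $f(c)=1$ for any $c\in V(G_2)$. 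The decisive point is that $c$, lying in $G_2$, is adjacent to every vertex of $G_1$: thus an attack on a vertex of $G_1$ adjacent to $u_1$ or $u_2$ is repelled by moving that legion while $c$ continues to dominate $G_1$, and an attack on a vertex of $K$ is repelled by moving the legion from $c$, again using that $K$ is a clique; attacks on $V(G_2)$ are immediate. Combined with $\gamma_r(G)\ge 3$, either construction forces $\gamma_r(G)=3$.

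For necessity I would begin with a weight-$3$ WRD function $f$ and split on its type. If $f$ is Type A with value-$2$ vertex $u$, I would show $V(G)\setminus N[u]$ is a clique: any such vertex $x$ is nonadjacent to $u$, so an attack on $x$ can only be defended by the value-$1$ vertex, and the requirement that the shifted function leave nothing undefended forces $x$ to be adjacent to every other vertex of $V(G)\setminus N[u]$; ranging over all such $x$ yields (i). If $f$ is Type B with positive vertices $a,b,c$, then two of them lie in the same part, say $a,b\in V(G_1)$; I would set $K_{ab}=V(G_1)\setminus(N[a]\cup N[b])$ and show it is a clique. The unifying observation is that, whether the third vertex lies in $G_1$ or in $G_2$, it is adjacent to every vertex of $K_{ab}$ (through the clique edges of $G_1$ forced by domination, or through the join) and is in fact the \emph{unique} positive neighbour of any $x\in K_{ab}$; hence an attack on $x$ must be defended by moving that third legion, and the resulting function must still dominate $G_1$, forcing $x$ adjacent to all of $K_{ab}\setminus\{x\}$. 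Letting $x$ range over $K_{ab}$ shows $K_{ab}$ is a clique and establishes (ii).

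I expect the Type B necessity step to be the main obstacle, precisely because the three positive vertices can be distributed across the two parts in several ways (all three in one part, or two-and-one). The crux is recognizing that all of these collapse into a single argument once one isolates $K_{ab}$ and sees that each of its vertices has exactly one positive neighbour; I would also take care of the degenerate configurations ($K$ or $K_{ab}$ empty, or a positive vertex accidentally universal) by invoking $\gamma_r(G)\ge 3$ to discard the cases that would instead produce weight $2$.
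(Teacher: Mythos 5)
Your proof is correct, and its global skeleton coincides with the paper's: the same reduction to the existence of a weight-$3$ WRD function under the standing assumptions, the same two sufficiency constructions, and the same clique-forcing argument for the $(2,1)$ case (your ``attack on $x\in V(G)\setminus N[u]$ must be answered by the unique positive neighbour'' is exactly the paper's private-neighbour argument in its Case 1). The one genuine divergence is in the $(1,1,1)$ necessity step. The paper first proves a normalization claim (its Claim 3.1-style lemma): if all three legions lie in $V(G_1)$, one can relocate the third to an arbitrary vertex of $V(G_2)$ and obtain another minimum WRD function, after which only the two-and-one configuration needs analysis. You instead observe that whichever side the third positive vertex $c$ lies on, domination (or the join) makes $c$ adjacent to all of $K_{ab}=V(G_1)\setminus(N[a]\cup N[b])$ and the \emph{unique} positive neighbour of each of its vertices, so the attack-and-defend argument applies verbatim in both distributions. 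Your route buys uniformity and sidesteps the normalization claim, whose verification is the delicate part of the paper's proof (e.g., an attack on the vertex $u_3$ that surrendered its legion under the modified function needs separate treatment there, since the WRD hypothesis on $f$ only covers vertices in $V_0$); the paper's route buys a reusable structural fact about minimum WRD functions on joins, at the cost of that extra checking. Your explicit handling of the degenerate cases ($K$ or $K_{ab}$ empty, a universal positive vertex) is a small completeness gain over the paper, which leaves these implicit in the standing assumption $\gamma_r(G)\geq 3$.
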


\begin{proof}
    If $(i)$ holds, define a function $f:V(G)\rightarrow \{0,1,2\}$ in following way: $f(u)=2$ and $f(v)=1$ where $v\in V(G)\setminus N[u]$ and $f(x)=0$ for every $x\in V(G)\setminus\{u,v\}$. Clearly, $f$ is a WRD function, hence $\gamma_r(G)=w(f)=3$. If $(ii)$ holds, then let without loss of generality there exists a set of two vertices $S=\{u_1,u_2\}\subseteq V(G_1)$ such that $G_1[V(G_1)\setminus (N[u_1]\cup N[u_2])]$ is a clique. Now define a function $g:V(G)\rightarrow \{0,1,2\}$ in following way: $g(u_1)=g(u_2)=1$, $g(u_3)=1$ where $u_3\in V(G_2)$ and $g(x)=0$ for every $x\in V(G)\setminus\{u_1,u_2,u_3\}$. Clearly, $g$ is a WRD function, hence $\gamma_r(G)=w(g)=3$.
    
    Conversely, let $\gamma_r(G)=3$ and $f$ be a WRD function of weight $3$. Now two cases may appear. 
    
    \noindent \textbf{Case 1:} There exists two vertices $u_1$, $u_2$ in $V(G)$ such that $f(u_1)=2$ and $f(u_2)=1$. Let $S=\{u_1,u_2\}$. Now if $G[pn(u_2,S)]$ is not a clique then there exists two nonadjacent vertices $v,v'\in pn(u_2,S)$. In that case, any attack on $v$ must result in the legion placed at $u_2$ being moved to $v$, which leaves $v'$ undefended. Hence $G[pn(u_2,S)]$ is a clique, implying $G[V(G)\setminus N[u_1]]$ is a clique.
    
    \noindent \textbf{Case 2:} There exists three vertices $u_1$, $u_2$ and $u_3$ in $V(G)$ such that $f(u_1)=1$, $f(u_2)=1$ and $f(u_3)=1$. 
    Before analyzing this case we will prove the following claim.

    \begin{claim}\label{claim:1}
        Let $f$ be a minimum WRD function of $G=G_1\oplus G_2$ with $f(u_1)=f(u_2)=f(u_3)=1$, where $u_1,u_2,u_3\in V(G_1)$ and $f(v)=0$, otherwise; then there exists a minimum WRD function $f'$ on $G$ such that $f'(u_1)=f'(u_2)=1$ and $f'(u_3')=1$ for $u_3'\in V(G_2)$ and $f(v)=0$, otherwise.
    \end{claim}

    \begin{proof}
       We need to show that the function $f'$ effectively defends against all potential attacks. 
        
        Consider an arbitrary $v \in V(G_2)$ with $f(v)=0$. Suppose that this vertex is attacked. In response, the legion stationed at vertex $u_1$ moves to defend vertex $v$. It is easy to observe that after this move, all vertices in the graph remain defended.
        
        Let us consider an unsecured vertex $v$ (in the context of $f'$) belonging to the set $V(G_1)$, which is subjected to an attack. In the configuration $f$, it is necessary for one of the vertices $u_1, u_2, u_3$ to be moved at vertex $v$ in order to protect against the attack. Following this move, no unsecured vertices will be undefended (as $f$ is a WRD function). If a vertex from the set $\{u_1,u_2\}$ relocates to vertex $v$ in order to defend against the attack in the context of the function $f$, then the same vertex relocates to vertex $v$ in the case of the function $f'$. Similarly, if vertex $u_3$ relocates to vertex $v$ in the context of function $f$, then vertex $u_3'$ relocates to vertex $v$ in the case of function $f'$. 
    \end{proof}

    Hence by Claim \ref{claim:1}, it is enough to check the case when $u_1,u_2\in V(G_1)$ (or $V(G_2)$) and $u_3\in V(G_2)$ (or $V(G_1)$). Without loss of generality, let
    $u_1,u_2\in V(G_1)$ and $u_3\in V(G_2)$. Let $G_1[V(G_1)\setminus (N[u_1]\cup N[u_2])]$ is not a clique, then there exists two vertices $v,v'\in V(G_1)\setminus (N[u_1]\cup N[u_2])$ which are non adjacent. Then any attack on $v$ must be defended by the legion stationed at $u_3$ by moving to $v$; which leaves $v'$ undefended, which is a contradiction. Hence $G_1[V(G_1)\setminus (N[u_1]\cup N[u_2])]$ is a clique.
\end{proof}

Since the checks mentioned in lemma \ref{lem:5} and observation \ref{obs:2} can be done in time $O(n^3)$, the following theorem can be stated.

\begin{thm}\label{th:7}
    Given a connected $P_4$-sparse graph $G$; $\gamma_r(G)$ can be computed in $O(n^3)$ time. 
\end{thm}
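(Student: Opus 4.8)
The plan is to turn the structural dichotomy of Theorem~\ref{th:5} into a decision procedure, since a connected $P_4$-sparse graph on at least two vertices is, at its top level, either a spider (thin or thick, with an empty or nonempty head) or a join $G_1\oplus G_2$ of two $P_4$-sparse graphs. First I would dispose of the trivial base cases: if $G$ is a single vertex then $\gamma_r(G)=1$, and more generally $\gamma_r(G)=1$ precisely when $G$ is complete, so from now on $G$ may be assumed non-complete. I would then run a $P_4$-sparse recognition and decomposition procedure to determine which branch of Theorem~\ref{th:5} applies and to recover the relevant sets (the partition $S,C,R$ for a spider, or the sides $V(G_1),V(G_2)$ for a join). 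This recognition step is known to be computable in polynomial (indeed near-linear) time and will not dominate the final bound.

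In the spider branch the value of $\gamma_r(G)$ is already pinned down by Lemmas~\ref{lem:1}--\ref{lem:4}, so no recursion is needed. Here I would merely test which of the four sub-cases holds---thin versus thick, and $R=\phi$ versus $R\neq\phi$---and read off the answer: $|C|$ for a headless thin spider, $|C|+1$ for a thin spider with a head, $2$ or $3$ (according to whether $|C|=2$ or $|C|\ge 3$) for a headless thick spider, and $3$ for a thick spider with a head. Classifying the spider and computing $|C|$ costs at most $O(n^2)$ time.

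The join branch is where the work concentrates. Because $G=G_1\oplus G_2$ admits a dominating pair we have $\gamma(G)\le 2$, and the bound $\gamma_r(G)\le 2\gamma(G)$ from \cite{weak_rom_1} gives $\gamma_r(G)\le 4$; since $G$ is non-complete, $\gamma_r(G)\in\{2,3,4\}$. I would therefore decide the value by two successive tests. First, test whether $\gamma_r(G)=2$ using Observation~\ref{obs:2}: check whether $G$ has a universal vertex (so $\gamma(G)=1$), or whether some dominating pair $S$ satisfies conditions~(a) and~(b) of Theorem~\ref{th:6}. If this fails, test whether $\gamma_r(G)=3$ using Lemma~\ref{lem:5}: search for a vertex $u$ with $G[V(G)\setminus N[u]]$ a clique, or for a pair $\{u_1,u_2\}$ lying inside a single side $G^\ast$ with $G^\ast[V(G^\ast)\setminus(N[u_1]\cup N[u_2])]$ a clique. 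If both tests fail, output $\gamma_r(G)=4$, the bound being tight by the discussion above.

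It remains to bound these checks. Each test iterates over a bounded family of candidate vertices or pairs---at most $O(n^2)$ candidates---and, for each candidate, verifies that a prescribed residual vertex set induces a clique and that the private-neighbour and clique conditions of Theorem~\ref{th:6} hold, which is done by scanning adjacencies. I expect the main obstacle to lie precisely in arguing carefully that the aggregate cost of these clique tests fits within $O(n^3)$---in particular in handling condition~(ii) of Lemma~\ref{lem:5} and condition~(ii) of Observation~\ref{obs:2}, where one ranges over pairs---and in confirming that no other branch of the computation exceeds this bound, so that the overall running time is $O(n^3)$ as claimed.
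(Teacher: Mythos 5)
Your proposal takes essentially the same route as the paper: the spider cases are read off directly from Lemmas~\ref{lem:1}--\ref{lem:4}, and the join case is decided by testing the conditions of Observation~\ref{obs:2} and Lemma~\ref{lem:5}, with $\gamma_r(G)=4$ as the fallback since $\gamma_r(G)\leq 2\gamma(G)\leq 4$. In fact, the paper's entire proof is the single assertion that these checks can be performed in $O(n^3)$ time, so your write-up---including your candid flag that the pairwise clique-test accounting is the delicate point---is, if anything, more detailed than the original.
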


\section{NP-hardness for Comb Convex Bipartite graphs}
\label{SEC:3}
In this section, we establish that the the DECIDE-WRD problem is NP-hard for comb convex bipartite graphs. To prove this result we present a polynomial-time reduction from the Restricted Exact Cover by $3$-Sets problem.

The Restricted Exact Cover by $3$-Sets (RXC$3$) problem is defined as follows: Given a finite set $X=\{x_1,\ldots,x_{3q}\}$, along with a collection $C=\{c_1,\ldots,c_{3q}\}$ (where $c_i \subseteq X$) comprising subsets of $X$, each containing exactly three elements from $X$, and ensuring that every element $x_i\in X$ is included in exactly three sets belonging to $C$. The objective of this problem is to check whether a subcollection, denoted as $C'(\subseteq C)$, such that each element $x_i\in X$ is contained in exactly one $c_i\in C'$, exists or not. This problem denoted as RXC$3$, was defined and proved to be NP-hard in \cite{GONZALEZ1985293}. We have shown a reduction from an instance of RXC$3$ problem to an instance of the DECIDE-WRD problem for comb convex bipartite graph. The reduction is explained in Construction $\mathcal{A}_1$.

\medskip
\noindent \textbf{Construction $\mathcal{A}_1$:} Consider an instance $(X, C)$ of the RXC$3$ problem such that $X=\{x_1,\ldots,x_{3q}\}$ and $C=\{c_1,\ldots,c_{3q}\}$. We proceed to construct a graph $H=(V',E')$ using the following procedure. First, we create three copies of the set $X$, denoted as $X_1, X_2,$ and $X_3$, as well as two copies of the set $C$, labeled as $C_1$ and $C_2$. Specifically, we define $X_i=\{x_{1}^{i},\ldots, x_{3q}^{i}\}$ for $i\in [3]$, $C_i=\{c_{1}^i,\ldots,c_{3q}^i\}$, for $i\in [2]$. Now $V'=X_1\cup X_2\cup X_3\cup C_1\cup C_2$ and $E'=\{c_i^{1}x_{j}^{1},c_i^{1}x_{j}^{2}~\vert~i,j\in [3q]\}\cup \{c_i^{1}x_{j}^{3}~\vert~x_j\in c_i\}\cup \{c_i^1c_i^2~\vert~i\in [3q]\}$.

\medskip
\noindent \textbf{Example 1:} $(X,C)$ be an instance of exact cover where $X=\{x_1,x_2\ldots,x_6\}$ and $C=\{c_1=\{x_1,x_4,x_5\},~c_2=\{x_2,x_4,x_3\},~c_3=\{x_2,x_3,x_5\},~c_4=\{x_1,x_4,x_6\},~c_5\\=\{x_1,x_2,x_6\},~c_6=\{x_3,x_5,x_6\}\}$. The corresponding $H$ from this specific instance has been given in Figure $\ref{fig:2}$.

\begin{figure}[h!]
    \centering
        \includegraphics[width=8 cm,height= 8.5 cm]{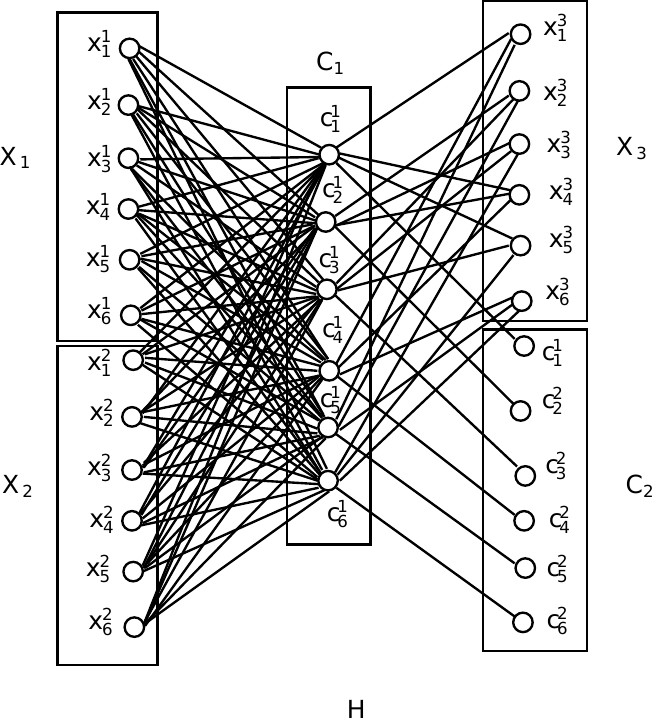}
    \caption{The Construction $\mathcal{A}_1$ applied on the Example 1}
\label{fig:2}
\end{figure}

Clearly $H$ is a bipartite graph,as $X_1\cup X_2\cup X_3\cup C_2$ and $C_1$ are two partitite sets. To establish that the graph $H$ is indeed a comb convex bipartite graph, we create a comb graph denoted by $F$ as follows: 

Let $V(F)=X_1\cup X_2\cup X_3\cup C_2$, with the backbone of the comb being $X_1\cup X_2$, and the vertices in $X_3\cup C_2$ acting as pendant vertices. It can be readily verified that for each $c_i^1\in C_1$, the neighborhood $N(c_i^1)$ in graph $F$ remains connected. Consequently, we conclude that $H$ is a comb convex bipartite graph.

\begin{thm}\label{th:1}
     $(X, C)$ is a ``yes" instance if and only if the graph $H$ possesses a WRD function with weight $4q$.  
\end{thm}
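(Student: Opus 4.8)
The plan is to prove both implications, treating the reverse direction (a weight-$4q$ WRD function forces an exact cover) as the substantial one. Throughout I would write $a_i = f(c_i^1)$, $b_i = f(c_i^2)$ and $d_j = f(x_j^3)$, and repeatedly use three structural facts from Construction $\mathcal{A}_1$: each $c_i^2$ is a pendant attached only to $c_i^1$, each $x_j^1,x_j^2$ is adjacent to all of $C_1$, and the $C_1$-neighbours of $x_j^3$ are exactly $\{c_i^1 : x_j\in c_i\}$.

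For the forward direction, given an exact subcollection $C'$ (necessarily $|C'|=q$), I would set $f(c_i^1)=2$ for $c_i\in C'$, $f(c_i^2)=1$ for $c_i\notin C'$, and $f\equiv 0$ elsewhere, of weight $2q+2q=4q$. Checking that $f$ is a WRD function is a routine case analysis: every $x_j^3$ is dominated by the unique cover set containing it, every $x_j^1,x_j^2$ and every pendant is dominated, and any attack is repelled either by moving a legion from a value-$2$ vertex $c_i^1$ (which drops to value $1$ and thus strands nobody) or from a pendant $c_i^2$ onto $c_i^1$. The only delicate attacks are those on the $X_3$ vertices, which succeed precisely because their defending cover vertex has value $2$ rather than $1$.

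The heart of the reverse direction is a lower bound $w(f)\ge 4q$ with an equality analysis. The key notion is a \emph{valid defender} of an attack on $x_k^3$: a positive neighbour $c_i^1$ from which a legion may move to $x_k^3$ leaving no vertex undefended. I would first establish the characterization that $c_i^1$ is a valid defender iff either $a_i=2$, or $a_i=1$ with $b_i\ge 1$ and every other element of $c_i$ is covered by a second positive cover set (or already carries positive weight); the point is that lowering $a_i$ from $1$ to $0$ must not strand the pendant $c_i^2$ nor any element that $c_i$ alone was covering. The pendant constraint ($a_i\ge 1$ or $b_i\ge 1$ for every $i$) yields $w(f)\ge 3q+p+\beta+\sum_k d_k$, where $p=|\{i:a_i=2\}|$ and $\beta=\sum_{i:a_i\ge 1} b_i$. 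Then, assigning to each undefended vertex $x_k^3$ a valid defender $\phi(k)$ and noting that each index is the image of at most three elements while each value-$1$ defender contributes at least $1$ to $\beta$, a counting argument gives $p+\beta\ge |\mathrm{image}(\phi)|\ge \tfrac{1}{3}(3q-\sum_k d_k)$, whence $w(f)\ge 4q+\tfrac{2}{3}\sum_k d_k\ge 4q$.

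Finally, equality $w(f)=4q$ forces every inequality to be tight: $\sum_k d_k=0$, so all $3q$ vertices $x_k^3$ need defenders, and the counting bound is tight only when the chosen defenders partition $X$ into $q$ pairwise disjoint triples, i.e.\ an exact cover. Since $\gamma_r(H)\ge 4q$ always, $H$ admits a weight-$4q$ WRD function iff $\gamma_r(H)=4q$ iff such a cover exists, which is the claim. I expect the main obstacle to be the valid-defender characterization and its use in the counting step: one must argue carefully that an $X_3$ attack cannot be repelled from a value-$1$ set unless its pendant is guarded and its other two elements are redundantly covered, since this is exactly the mechanism that rules out cheap ``all value-$1$'' solutions and forces the extra $q$ units of weight.
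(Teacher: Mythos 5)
Your proof is correct and follows essentially the same route as the paper's: the pendant pairs $\{c_i^1,c_i^2\}$ give the baseline $3q$, your observation that a valid defender of an attacked $x_k^3$ must have $a_i=2$ or $a_i=1,\,b_i\ge 1$ is exactly the paper's claim that each zero-weight $x\in X_3$ has a neighbour $c_i^1$ with $f(c_i^1)+f(c_i^2)\ge 2$, and your three-to-one counting with the equality analysis forcing $\sum_k d_k=0$ and a partition of $X$ into $q$ defender triples is the paper's argument in only slightly refined notation. The differences are cosmetic: your forward construction places the weight-$1$ legions on the pendants $c_i^2$ of non-cover sets rather than on the vertices of $C_1\setminus C_1'$ (both yield weight $4q$ and are easily checked), and the sufficiency half of your ``iff'' defender characterization --- which as stated overlooks that the vertices of $X_1\cup X_2$ must also remain defended after the move --- is never used, since only necessity enters the counting.
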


\begin{proof}
    Given an instance $(X, C)$ of the RXC$3$ problem, we can construct a comb convex bipartite graph $H$ using construction $\mathcal{A}_1$. It is important to note that this construction can be done in polynomial-time.
    
    Consider a ``yes" instance of the RXC$3$ problem $(X,C)$ such that $X=\{x_1,\ldots,x_{3q}\}$ and $C=\{c_1,\ldots,c_{3q}\}$. Let $C'\subseteq C$ be an exact cover of the instance $(X,C)$. It is important to note that for any such instance, the size of an exact cover (if one exists) is $\frac{\vert C\vert}{3}$. Hence $\vert C'\vert=q$. Now, we define a function $f:V(H)\rightarrow \{0,1,2\}$ as follows:
    \begin{enumerate}
        \item Define $C_1'=\{c_i^1~\vert~c_i\in C'\}$. For every vertex $u$ in $C_1'$, we set $f(u)=2$.
        \item For every vertex $u$ in $C_1\setminus C_1'$ (the corresponding sets in $C$, which are not included in the exact cover), we set $f(u)=1$.
        \item For all other vertices $u$ in $H$, $f(u)=0$.
    \end{enumerate}
    It is evident that this function $f$ constitutes a WRD function with a weight precisely equal to $4q$.

    Conversely, consider a WRD function $f$ with a weight of at most $4q$. It is important to note that for any vertex $c_i^1\in C_1$ and its corresponding counterpart $c_i^2\in C_2$ in the constructed graph $H$, we have the property $f(c_i^1) + f(c_i^2) \geq 1$. Therefore, the sum of the labels for all $c_i^1$ and $c_i^2$ is at least $3q$. As a consequence, we can deduce that $f(X_3)$ must be less than or equal to $q$.

    Now, let us assume that there exist $l$ vertices in $X_3$ with nonzero labels (where $0 < l \leq q$). This implies that $3q - l$ vertices in the set $X_3$ must have a label of $0$. Now, the following claim will help us to complete the proof.

    \begin{claim}\label{claim:2}
        For each $x\in X_3$ with $f(x)=0$; there exists a neighbour $c_i^1$ of $x$ with $f(c_i^1)+f(c_i^2)\geq 2$.
    \end{claim}

    \begin{proof}
        Let there exists $x\in X_3$ such that $x$ has no neighbours $c_i^1\in C_1$, with the property: $f(c_i^1)+f(c_i^2)\geq 2$ (which means $f(c_i^1)+f(c_i^2)= 1$ for every $c_i^1\in C_1$). Now consider an attack on $x$. A legion must come from one of its neighbours in $C_1$. Let from $c_j^1$ one legion comes to $x$, but these move leaves the vertex $c_j^2$ undefended (as $f(c_i^1)+f(c_i^2)= 1$), contradicting the fact that $f$ is a WRD function. Hence the claim holds.
    \end{proof}
    
    Since there are $3q - l$ vertices in the set $X_3$ with a label of $0$, and for each of these vertices, there exists a neighbor in $C_1$ (say $c_i^1)$ that satisfies $f(c_i^1) + f(c_i^2) \geq 2$, hence by this fact and Claim \ref{claim:2}, we can infer that there must be at least $\frac{3q - l}{3}$ vertices among the $c_i^1$s that satisfy $f(c_i^1) + f(c_i^2) \geq 2$. This implies that the weight of the WRD function $f$ is greater than or equal to $\frac{3q - l}{3} + 3q + l$ which simplifies to $4q + \frac{2l}{3}$. However, this conclusion contradicts the initial assumption that $f$ has a weight of at most $4q$. Hence it follows that $l$ must be equal to $0$. Therefore, $f(X_3) = 0$, and there exist exactly $\frac{3q}{3} = q$ vertices in the set $C_1$ that satisfy the condition $f(c_i^1) + f(c_i^2) = 2$. As a result, $C' = \{c \in C~\vert~f(c^1) + f(c^2) = 2\}$ forms an exact cover of the set $X$.
\end{proof}

Since checking the membership of the DECIDE-WRD problem in NP is easily verifiable, we can conclude by Theorem \ref{th:1}, that the DECIDE-WRD is NP-complete. 

\section{NP-hardness for Star Convex Bipartite graphs}
\label{SEC:4}
In this section, our objective is to establish the NP-completeness of the DECIDE-WRD problem for star convex bipartite graphs. To accomplish this, we will provide a polynomial-time reduction from the RXC$3$ problem to the DECIDE-WRD problem for star convex bipartite graphs. The construction is delineated as follows:

\noindent \textbf{Construction $\mathcal{A}_2$:}  Given an instance $(X,C)$ of the RXC$3$ problem, where $X=\{x_1,\ldots,x_{3q}\}$ and $C=\{c_1,\ldots,c_{3q}\}$; define $V'=\{a\}\cup C_1 \cup C_2 \cup X'$. Here, $X'=\{x_{1}',\ldots, x_{3q}'\}$, $C_1=\{c_{1}^1,\ldots, c_{3q}^1\}$, and $C_2=\{c_{1}^2,\ldots, c_{3q}^2\}$. The edge set $E'$ is defined as follows, $E'=\{x_j'c_i^1~\vert~x_j \in c_i\}\cup \{ac_i^1~\vert~c_i^1\in C_1\}\cup \{c_i^1c_i^2~\vert~i\in [3q]\}$.

Now a star $F$ can be made with vertices of $\{a\}\cup X'\cup C_2$ where $a$ is the central vertex of the star; such that $F[N(c_i^1)]$ is connected for every $i\in [3q]$. So, $H$ is a star convex bipartite graph. In summary, given an instance of the RXC$3$ problem, we can effectively construct a star convex bipartite graph $H$ using the defined methodology. Importantly, this construction can be executed in polynomial-time. Please refer to Figure \ref{fig:3} for a representation of the construction.

\begin{figure}[h!]
    \centering
    \includegraphics[width=5 cm,height= 7 cm]{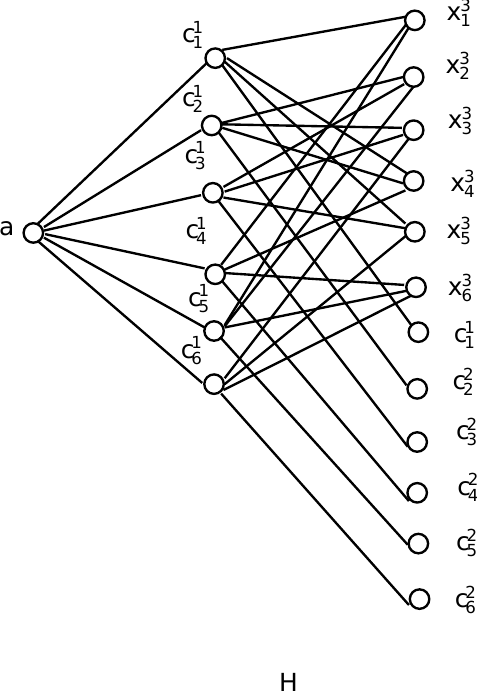}
    \caption{The Construction $\mathcal{A}_2$ applied on the Example 1}
    \label{fig:3}
\end{figure}

\begin{thm}\label{th:2}
    $(X,C)$ has an exact cover if and only if $H$ has a WRD function of weight $4q$. 
\end{thm}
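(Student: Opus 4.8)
The plan is to prove both directions by closely following the template of Theorem~\ref{th:1}, with the combinatorial bookkeeping adapted to the new apex vertex $a$ and the pendant copies $c_i^2$. The construction differs from Construction $\mathcal{A}_1$ mainly in that the two backbone copies are replaced by the single vertex $a$ adjacent to all of $C_1$, and the element side is the single copy $X'$; the vertices of $C_2$ remain pendant, each $c_i^2$ having $c_i^1$ as its unique neighbour. This last feature is what drives the whole argument.

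For the forward direction, given an exact cover $C'\subseteq C$ with $|C'|=q$, I would set $f(c_i^1)=2$ whenever $c_i\in C'$, $f(c_i^1)=1$ whenever $c_i\notin C'$, and $f=0$ on $\{a\}\cup C_2\cup X'$. Since $|C_1|=3q$, the weight is $2q+1\cdot(3q-q)=4q$. The verification that $f$ is a WRD function splits by the weight-$0$ attack targets: each $x_j'$ is covered by a unique $c_i\in C'$, so a legion from that value-$2$ vertex $c_i^1$ defends it while the residual value $1$ still guards $c_i^2$; an attack on a pendant $c_i^2$ is met by $c_i^1$, where I must check that if $c_i\notin C'$ (so $f(c_i^1)=1$) the resulting $0$ at $c_i^1$ is harmless because each $x_j\in c_i$ is still defended by its covering set in $C'$ while $c_i^1$ itself is now guarded by $c_i^2$; and an attack on $a$ must be answered by pulling a legion \emph{specifically} from a value-$2$ vertex $c_i^1$ with $c_i\in C'$, so the demoted vertex still guards its pendant. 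These are the slightly delicate cases, and each reduces to the exact-cover property.

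For the converse, suppose $f$ is a WRD function with $w(f)\le 4q$. Because each $c_i^2$ is pendant with sole neighbour $c_i^1$, I first observe $f(c_i^1)+f(c_i^2)\ge 1$ for every $i$, whence $f(C_1)+f(C_2)\ge 3q$ and therefore $f(a)+f(X')\le q$. I would then establish the analogue of Claim~\ref{claim:2}: for each $x_j'$ with $f(x_j')=0$ there is a neighbour $c_i^1$ (so $x_j\in c_i$) with $f(c_i^1)+f(c_i^2)\ge 2$; otherwise every neighbouring pair sums to exactly $1$, forcing any legion answering an attack on $x_j'$ to come from a value-$1$ vertex $c_i^1$, which strands the pendant $c_i^2$. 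Letting $l$ be the number of $x_j'$ with positive value, the remaining $3q-l$ zero vertices each have such a witness pair, and since each $c_i$ has only three elements a single witness pair serves at most three of them; hence at least $(3q-l)/3$ pairs have sum $\ge 2$. Summing, $w(f)\ge \bigl(3q+\tfrac{3q-l}{3}\bigr)+l=4q+\tfrac{2l}{3}$, so $w(f)\le 4q$ forces $l=0$, $f(a)=0$, and exactly $q$ pairs of sum $2$. Finally $C'=\{c_i : f(c_i^1)+f(c_i^2)=2\}$ has size $q$ and, by the claim, covers all $3q$ elements; $q$ triples covering $3q$ elements must be pairwise disjoint, so $C'$ is an exact cover.

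The main obstacle, as in Theorem~\ref{th:1}, is the counting inequality $w(f)\ge 4q+\tfrac{2l}{3}$ together with its supporting claim. The genuinely new wrinkle relative to the comb-convex case is keeping track of the apex vertex $a$ in both directions: ensuring that an attack on $a$ is defended without stranding any pendant in the forward construction, and cleanly absorbing the nonnegative term $f(a)$ into the weight bound in the converse.
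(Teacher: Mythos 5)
Your proof is correct and follows essentially the same route as the paper's: the identical forward assignment ($2$ on $C_1'$, $1$ on $C_1\setminus C_1'$), the same pendant-pair bound $f(c_i^1)+f(c_i^2)\geq 1$, the same witness claim (the paper's Claim~\ref{claim:3}), and the same counting argument yielding $w(f)\geq 4q+\frac{2l}{3}$ and hence $l=0$. If anything you are slightly more careful than the paper, which asserts the forward direction is ``evident'' without checking the defense of the apex $a$, leaves the disjointness of the final $q$ triples implicit, and carries over a stray $f(X_3)$ from Theorem~\ref{th:1} where $f(a)+f(X')$ is meant.
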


\begin{proof}
    Given an instance $(X, C)$ of the RXC$3$ problem, we can construct a star convex bipartite graph $H$ using construction $\mathcal{A}_2$. It is important to note that this construction can be done in polynomial-time.
    
    Consider a ``yes" instance of the RXC$3$ problem $(X,C)$ such that $X=\{x_1,\ldots,x_{3q}\}$ and $C=\{c_1,\ldots,c_{3q}\}$. Let $C'\subseteq C$ be an exact cover of the instance $(X,C)$. It is important to note that for any such instance, the size of an exact cover (if one exists) is $\frac{\vert C\vert}{3}$. Hence $\vert C'\vert=q$. Now, we define a function $f:V(H)\rightarrow \{0,1,2\}$ as follows:
    \begin{enumerate}
        \item Define $C_1'=\{c_i^1~\vert~c_i\in C'\}$. For every vertex $u$ in $C_1'$, we set $f(u)=2$.
        \item For every vertex $u$ in $C_1\setminus C_1'$ (the corresponding sets in $C$, which are not included in the exact cover), we set $f(u)=1$.
        \item For all other vertices $u$ in $H$, $f(u)=0$.
    \end{enumerate}
    It is evident that this function $f$ constitutes a WRD function with a weight precisely equal to $4q$.

    Conversely, consider a WRD function $f$ with a weight of at most $4q$. It is important to note that for any vertex $c_i^1\in C_1$ and its corresponding counterpart $c_i^2\in C_2$ in the constructed graph $H$, we have the property $f(c_i^1) + f(c_i^2) \geq 1$. Therefore, the sum of the labels for all $c_i^1$ and $c_i^2$ is at least $3q$. As a consequence, we can deduce that $f(X_3)$ must be less than or equal to $q$.

    Now, let us assume that there exist $l$ vertices in $X'$ with nonzero labels (where $0 < l \leq q$). This implies that $3q - l$ vertices in the set $X'$ must have a label of $0$. Now, the following claim will help us to complete the proof.

    \begin{claim}\label{claim:3}
        Given a WRD function $f$ on $H$, for each $x\in X'$ with $f(x)=0$; there exists a neighbour $c_i^1$ of $x$ with $f(c_i^1)+f(c_i^2)\geq 2$.
    \end{claim}

    \begin{proof}
         Let there exists $x\in X'$ such that $x$ has no neighbours $c_i^1\in C_1$, with the property: $f(c_i^1)+f(c_i^2)\geq 2$ (which means $f(c_i^1)+f(c_i^2)= 1$ for every $c_i^1\in C_1$). Now consider an attack on $x$. A legion must come from one of its neighbours in $C_1$. Let from $c_j^1$ one legion comes to $x$, but these move leaves the vertex $c_j^2$ undefended (as $f(c_i^1)+f(c_i^2)= 1$), contradicting the fact that $f$ is a WRD function. Hence the claim holds.
    \end{proof}
    
    Since there are $3q - l$ vertices in the set $X'$ with a label of $0$, and for each of these vertices, there exists a neighbor in $C_1$ (say $c_i^1)$ that satisfies $f(c_i^1) + f(c_i^2) \geq 2$, hence by this fact and Claim \ref{claim:3}, we can infer that there must be at least $\frac{3q - l}{3}$ vertices among the $c_i^1$s that satisfy $f(c_i^1) + f(c_i^2) \geq 2$. This implies that the weight of the WRD function $f$ is greater than or equal to $\frac{3q - l}{3} + 3q + l$ which simplifies to $4q + \frac{2l}{3}$. However, this conclusion contradicts the initial assumption that $f$ has a weight of at most $4q$. Hence it follows that $l$ must be equal to $0$. Therefore, $f(X') = 0$, and there exist exactly $\frac{3q}{3} = q$ vertices in the set $C_1$ that satisfy the condition $f(c_i^1) + f(c_i^2) = 2$. As a result, $C' = \{c_i \in C~\vert~f(c_i^1) + f(c_i^2) = 2\}$ forms an exact cover of the set $X$.
\end{proof}

Hence, by Theorem \ref{th:2} and the fact that membership of the problem in NP is easily verifiable, it can be concluded that the DECIDE-WRD is NP-complete.

Now, in contrast to the aforementioned result, we show that the MIN-WRD problem is polynomial-time solvable for star convex bipartite graphs with bounded degrees. Now, we state a theorem that is used in proving the above statement.

\begin{thm}\label{th:star}\cite{DBLP:journals/dam/PandeyP19}
    A bipartite graph $G = (X, Y, E)$ is a star convex bipartite graph if and only if there exists a vertex $x$ in $X$ such that every vertex $y$ in $Y$ is either adjacent to $x$ or is a pendant vertex.
\end{thm}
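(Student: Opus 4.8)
The plan is to prove both directions of the equivalence by directly unwinding the definition of star convexity and using a structural characterization of the connected subgraphs of a star. The crucial observation I would establish first is the following: if $T=(X,F)$ is a star with centre $x$ (so $x$ is adjacent in $T$ to every other vertex of $X$, and no two non-central vertices are adjacent), then a nonempty subset $A\subseteq X$ induces a subtree, i.e.\ a connected subgraph $T[A]$, if and only if either $|A|=1$ or $x\in A$. Indeed, if $x\notin A$ and $|A|\geq 2$, then $A$ consists of at least two leaves, which are pairwise nonadjacent in $T$, so $T[A]$ is disconnected; conversely a singleton is trivially connected, and if $x\in A$ then every other vertex of $A$ is adjacent to $x$, so $T[A]$ is connected.

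For the forward direction, suppose $G$ is star convex and let $T=(X,F)$ be a witnessing star with centre $x$. Fix any $y\in Y$. By definition $N(y)$ induces a subtree of $T$, so by the observation either $|N(y)|=1$, in which case $y$ is a pendant vertex, or $x\in N(y)$, in which case $y$ is adjacent to $x$. This is exactly the required dichotomy, so the vertex $x$ witnesses the right-hand side of the equivalence.

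For the reverse direction, suppose there is a vertex $x\in X$ such that every $y\in Y$ is adjacent to $x$ or is pendant. I would construct the star $T=(X,F)$ by joining $x$ to every other vertex of $X$, which is a star with centre $x$. Then for each $y\in Y$, if $y$ is adjacent to $x$ we have $x\in N(y)\subseteq X$, so $T[N(y)]$ is connected by the observation; and if $y$ is pendant then $|N(y)|=1$ and $T[N(y)]$ is a single vertex, again a subtree. Hence every $y$ has its neighbourhood inducing a subtree of $T$, so $G$ is star convex.

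The only subtlety, and the main point I would be careful about, is the degenerate case $|X|=1$: then $T$ is a single vertex, which we regard as a trivial star, and every non-isolated $y\in Y$ satisfies $N(y)=\{x\}$, so it is simultaneously pendant and adjacent to $x$, making the equivalence hold vacuously. Beyond this boundary case the argument is a pure translation of definitions, so I do not expect any genuine obstacle; the entire content sits in the subtree-characterization observation stated at the outset.
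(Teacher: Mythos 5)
Your proof is correct, and it is worth noting that the paper itself contains no proof of this statement: Theorem~\ref{th:star} is imported verbatim from \cite{DBLP:journals/dam/PandeyP19}, so your argument supplies a self-contained justification rather than competing with an in-paper one. Your structure is the natural one, and the entire content does indeed sit in your opening observation that a nonempty $A\subseteq X$ induces a connected subgraph of a star centred at $x$ if and only if $|A|=1$ or $x\in A$; both directions then follow by direct translation, exactly as you say. The one convention-level point you could make explicit, beyond the $|X|=1$ case you already handle, is what happens when some $y\in Y$ is isolated: then $N(y)=\emptyset$, which is neither a single vertex nor a set containing $x$, and $y$ is neither pendant nor adjacent to $x$, so the equivalence only makes sense under the tacit assumption (standard in this literature, and implicit in the cited source) that the empty set does not induce a subtree, or equivalently that $Y$ has no isolated vertices. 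This is a boundary convention rather than a gap in your reasoning, and your proof is otherwise complete.
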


\begin{thm}
    The MIN-WRD problem is polynomial-time solvable for star convex bipartite graph of bounded degree.
\end{thm}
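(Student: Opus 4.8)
The plan is to establish that the MIN-WRD problem is fixed-parameter tractable with respect to the maximum degree $\Delta$ for star convex bipartite graphs, which immediately yields polynomial-time solvability when $\Delta$ is bounded by a constant. By Theorem \ref{th:star}, a star convex bipartite graph $G=(X,Y,E)$ has a distinguished vertex $x^\ast \in X$ such that every $y\in Y$ is either adjacent to $x^\ast$ or is a pendant vertex. The key structural consequence I would exploit is that the vertices of $Y$ split into two types: the pendant vertices (each with a unique neighbour in $X$) and the vertices adjacent to $x^\ast$. Since the graph has maximum degree $\Delta$, the special vertex $x^\ast$ has at most $\Delta$ neighbours in $Y$, so only at most $\Delta$ vertices of $Y$ are of the non-pendant type. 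This is the crucial observation that bounds the ``complicated'' part of the graph.

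First I would argue that, after removing $x^\ast$ and its incident edges, the remaining graph is extremely simple. Every $y\in Y$ not adjacent to $x^\ast$ is a pendant, so in $G-x^\ast$ the set $Y$ consists of pendant vertices hanging off $X$ together with at most $\Delta$ former neighbours of $x^\ast$ (which may now be isolated or pendant). Thus $G-x^\ast$ is essentially a disjoint union of stars centred at vertices of $X$, a forest of bounded structure whose WRD-type behaviour is easy to analyse. The overall strategy is a bounded-branching argument: I would enumerate all $O(1)$ choices for the value $f(x^\ast)\in\{0,1,2\}$ and, more importantly, branch over the behaviour of the at most $\Delta$ non-pendant vertices of $Y$ and the at most $\Delta$ neighbours $N(x^\ast)\cap X$ through which any defending legion can travel. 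Since there are at most $\Delta$ such vertices, the number of relevant local configurations is bounded by a function of $\Delta$ alone.

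The main technical work is handling the defence condition, which is global rather than local: recall that a WRD function requires that for every undefended-candidate vertex there be a single legion move leaving no vertex undefended. The subtlety is that a legion sitting on a vertex of $X$ adjacent to $x^\ast$ could in principle be used to defend an attack on $x^\ast$, and conversely a legion on $x^\ast$ can defend any of its (at most $\Delta$) neighbours. Once the values on $x^\ast$, on $N(x^\ast)$, and on the at most $\Delta$ non-pendant $Y$-vertices are fixed by the branching, the rest of the graph decomposes into independent stars centred at individual vertices $v\in X\setminus\{x^\ast\}$ together with their pendant leaves in $Y$. For each such star the optimal assignment — subject to the already-fixed ``interface'' values and to whether $v$ needs to retain a movable legion — can be computed in constant time by checking the few local cases (a star $K_{1,t}$ has a trivially computable weak Roman domination number, and the presence of a pendant forces the centre to carry weight so that each leaf is both dominated and defendable). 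I would formalise this as: for a fixed interface, the total optimum is the fixed interface weight plus the sum over the independent stars of their locally optimal weights.

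I expect the main obstacle to be bookkeeping the interaction between the defence requirement at $x^\ast$ and the pendant-leaf constraints, specifically ensuring that a legion ``borrowed'' to defend $x^\ast$ (or borrowed from a neighbour of $x^\ast$ to defend some $y$) does not simultaneously leave a pendant leaf undefended. This is precisely the kind of coupling that makes the general problem NP-hard, and the reason it becomes tractable here is that all such coupling is funnelled through the $O(\Delta)$ vertices in $\{x^\ast\}\cup (N(x^\ast)\cap X)\cup (N(x^\ast)\cap Y)$; once those are pinned down by branching, the residual stars are genuinely independent. I would therefore carefully verify that the defence condition factorises across the stars after the interface is fixed, so that minimising the total weight reduces to independent per-star minimisations. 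Combining the $2^{O(\Delta)}\cdot\mathrm{poly}(n)$ branches with the constant-time per-star computation gives the claimed polynomial running time for bounded $\Delta$, completing the proof.
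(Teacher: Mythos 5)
Your reduction to independent per-star minimisations --- the step you yourself flag as needing careful verification --- is where the argument breaks. First, a small but telling slip: since $G$ is bipartite and $x^\ast\in X$, the set $N(x^\ast)\cap X$ is empty, so that part of your interface is vacuous; the vertices through which defending legions actually travel are the star centres in $N(N(x^\ast))\subseteq X$, of which there can be $\Theta(\Delta^2)$, and your branching never pins their values down. With only the values on $\{x^\ast\}\cup N(x^\ast)$ fixed, the residual stars are \emph{not} independent: in the component of $x^\ast$, every centre $v\in X\setminus\{x^\ast\}$ is adjacent to some $y\in N(x^\ast)$, and distinct stars are coupled through such shared $y$'s. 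Concretely, take $x^\ast$ adjacent to $y_1,y_2$; two centres $v_1,v_2$ adjacent to $y_2$, each with one pendant leaf $\ell_1,\ell_2$; fix the interface $f(x^\ast)=f(y_2)=0$, $f(y_1)=1$. Each star in isolation would settle for weight $1$ at its centre (the leaf is dominated and an attack on it is defendable), but globally $f(v_1)=f(v_2)=1$ is infeasible: an attack on $y_2$ can only be met by moving the legion from $v_1$ or $v_2$, which strands the corresponding leaf. One of the two stars must pay weight $2$, and which one may do so depends on the other --- a cross-star constraint invisible to the interface. So the defence condition does not factorise, and the claimed $2^{O(\Delta)}\cdot\mathrm{poly}(n)$ algorithm is unsupported as stated.

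The decisive observation, which your second paragraph comes close to but never draws, is that the structure forces the whole connected graph to have \emph{constant size}, making any branching or decomposition unnecessary. This is exactly the paper's proof: $\vert N(x^\ast)\vert\leq\Delta$; by Theorem \ref{th:star} every $y\notin N(x^\ast)$ is pendant, so by connectivity every vertex of $X$ has a neighbour in $N(x^\ast)$, giving $\vert X\vert\leq \Delta(\Delta-1)+1$; and each vertex of $X$ carries at most $\Delta-1$ additional pendant leaves, whence $\vert V(G)\vert\leq \Delta^3+1$. Brute force over all $3^{\vert V\vert}$ labellings then runs in constant time for fixed $\Delta$ (disconnected inputs only add star components, whose weak Roman domination number is trivial). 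Note that you never invoke connectivity, which is what rules out the picture of many stars far from $x^\ast$ inside one component; and once you repair your scheme by branching on all centre values as the coupling requires, it degenerates into precisely this constant-size enumeration. Replace the factorisation argument by the size bound and the proof is immediate.
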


\begin{proof}
    Consider a connected star convex bipartite graph $G=(X\cup Y,E)$, where $\Delta(G)\leq k$. By Theorem \ref{th:star}, there exists a vertex $x$ in $X$, such that every $y\in Y$ is either adjacent to $y$ or is a pendant vertex. Now let $Y_0=N(x)$ and $X_0=N_G(Y_0)$. It is easy to observe that $X_0=X$, because $G$ is connected. Note that $\vert Y\vert=\vert N(x)\vert\leq k$ and $\vert X_0\vert\leq k(k-1)+1$. Hence $\vert V(G)\vert= \vert Y_0\vert + \vert X\vert+\vert Y\setminus Y_0\vert\leq k+k(k-1)+1+k^2(k-1)=k^3+1$. Since the size of $V(G)$ is bounded by $k^3+1$, the MIN-WRD problem can be solved in constant time.
\end{proof}

\section{NP-hardness for Split graphs}
\label{SEC:5}
In this section, we establish the NP-hardness of the problem for split graphs. This result is proved by providing a polynomial-time reduction from the Red Blue Dominating Set problem. Given a bipartite graph $G = (B \cup R, E)$, the Red Blue Dominating Set problem asks to identify a minimum cardinality set $S\subseteq R$ that dominates the set $B$. The decision version of this problem is formulated as follows:
\medskip

\noindent\underline{\textsc{Red Blue Dominating Set decision} problem (DECIDE-RBD)}
\\
[-13pt]
\begin{enumerate}
  \item[] \textbf{Instance}: A bipartite graph $G=(R\cup B,E)$ and a positive integer $k\leq \vert R\vert$.
  \item[] \textbf{Question}: Does there exist $S\subseteq R$ such that $N(S)=B$ and $\vert S\vert \leq k$?
\end{enumerate}

The DECIDE-RBD problem is known to be NP-hard \cite{DBLP:journals/corr/abs-2203-10630}. Now we describe a reduction which maps an instance of the DECIDE-RBD problem to an instance of the DECIDE-WRD problem.

\noindent \textbf{Construction $\mathcal{A}_3$:} Given a bipartite graph $G=(R\cup B,E)$, where the partition consists of $R=\{r_1,\ldots,r_l\}$ and $B=\{b_1,\ldots,b_h\}$. We create the graph $H(V(H),E(H))$ in the following way: $V(H)=R\cup B\cup R'$ where $R'=\{r_1',\ldots,r_l'\}$. $E(H)$ is defined as follows:
\begin{enumerate}
    \item Include all the edges from the original graph $G$, preserving its structure ($E$).
    \item Introduce edges connecting each pair of vertices $r_i$ and $r_j$, where $i\neq j$ and $i, j\in [l]$. This step ensures that all vertices in $R$ are pairwise connected.
    \item Add edges between each vertex $r_i$ and its corresponding counterpart $r_i'$, where $i\in [l]$.
\end{enumerate}
    
Note that in the resulting graph $H$, $R$ is a clique and $R'\cup B$ is an independent set. Hence $H$ is a split graph. Also, the entire construction process can be accomplished in polynomial-time. The reduction can be understood more precisely by the example shown in Figure $\ref{fig:4}$.

\begin{figure}[h!]
    \centering
    \includegraphics[width=0.7\linewidth]{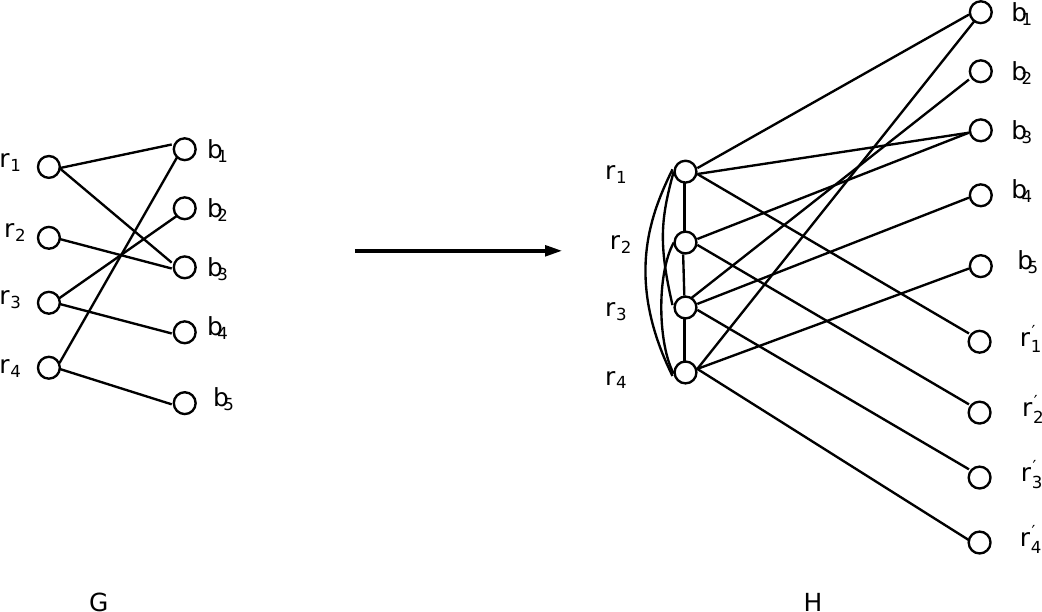}
    \caption{The Construction $\mathcal{A}_3$ applied on the Example 1}
    \label{fig:4}
\end{figure}

\begin{thm}\label{th:3}
    $(G,k)$ is a ``yes" instance of the RBD problem if and only if $(H,k+\vert R\vert)$ is a ``yes" instance of the DECIDE-WRD problem. 
\end{thm}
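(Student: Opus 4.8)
The plan is to prove both implications of the biconditional by translating between a set $S \subseteq R$ dominating $B$ with $|S| \le k$ and a WRD function on $H$ of weight at most $k + |R|$, writing $|R| = l$ throughout. The guiding intuition is that each pendant $r_i'$ forces the pair $\{r_i, r_i'\}$ to carry total weight at least $1$, making a baseline cost of $l$ unavoidable; the dominating set is then encoded by exactly those clique vertices that are upgraded from weight $1$ to weight $2$. Throughout I assume every $b_j$ has a neighbour in $R$, since otherwise $B$ cannot be dominated and the instance is trivially handled.

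For the forward direction I would take a dominating set $S$ with $|S| \le k$ and define $f(r_i) = 2$ for $r_i \in S$, $f(r_i) = 1$ for $r_i \in R \setminus S$, and $f = 0$ on $R' \cup B$, so that $w(f) = l + |S| \le l + k$. First I would check that no vertex is undefended, which is immediate since every $r_i'$ and every $b_j$ sees a positive-valued vertex of $R$. The only nontrivial part of the weak condition is an attack on a pendant $r_i'$ with $r_i \notin S$: moving the single legion off $r_i$ empties it, and I must argue nothing becomes undefended. Here $r_i$ stays defended through $r_i'$ (now carrying $1$), and crucially every $b_j$ adjacent to $r_i$ remains defended because $S$ dominates $B$, so such a $b_j$ still sees a weight-$2$ vertex of $S$. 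Attacks on a $b_j$ are answered by pulling a legion from a neighbour in $S$ (which stays positive), and attacks on pendants of $S$-vertices are trivial.

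The converse is where the real work lies. Given a WRD function $f$ on $H$ with $w(f) \le k + l$, I would first record that $f(r_i) + f(r_i') \ge 1$ for every $i$ (else $r_i'$ is undefended), and set $S_0 = \{\, r_i : f(r_i) + f(r_i') \ge 2 \,\}$. Writing $w(f) = \sum_i\bigl(f(r_i)+f(r_i')\bigr) + \sum_j f(b_j)$ and subtracting the baseline $l$, a counting argument gives $|S_0| \le \sum_i\bigl(f(r_i)+f(r_i')-1\bigr) \le w(f) - l \le k$. The key claim to establish is that every $b_j$ with $f(b_j) = 0$ has a neighbour in $S_0$: if instead all its $R$-neighbours had pair-weight exactly $1$, then defending an attack on $b_j$ would force a legion from a clique vertex $r_i$ with $f(r_i) = 1$, emptying $r_i$ and leaving its pendant $r_i'$ undefended, contradicting that $f$ is a WRD function.

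Finally I would account for the leftover vertices $b_j$ with $f(b_j) \ge 1$ that happen not to be dominated by $S_0$: for each such $b_j$ I add one arbitrary neighbour in $R$, forming $S \supseteq S_0$. Since each of these vertices contributes at least $1$ to $\sum_j f(b_j)$, the number of additions is at most $\sum_j f(b_j)$, so the total still obeys $|S| \le \sum_i\bigl(f(r_i)+f(r_i')-1\bigr) + \sum_j f(b_j) = w(f) - l \le k$, while by construction $S$ now dominates all of $B$. I expect the main obstacle to be precisely this bookkeeping in the converse: simultaneously keeping $|S| \le k$ and guaranteeing full domination, since weight deposited on $B$ must be charged carefully against both the size budget and the domination requirement, and the pendant-undefending argument is what rules out the cheap configurations that would otherwise break the correspondence.
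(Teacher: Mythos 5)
Your proof is correct, and while the forward direction coincides with the paper's (the same function: $2$ on $S$, $1$ on $R\setminus S$, $0$ elsewhere — you merely verify the WRD property in more detail than the paper, which asserts it), your converse takes a genuinely different route. The paper first assumes without loss of generality that $g$ has minimum weight, argues that any $b_i$ with $g(b_i)\geq 1$ must have a neighbour $r_j\in N_R(b_i)$ with $g(r_j)+g(r_j')=1$ (otherwise zeroing out $b_i$ would contradict minimality), and then iteratively reconfigures $g$ — raising such a pair to total weight $2$ and setting $g'(b_i)=0$ — until it reaches a WRD function $f$ of the same weight with $f(B)=0$, at which point $S=\{r_j~\vert~f(r_j)+f(r_j')=2\}$ is read off. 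You instead leave $f$ untouched: you define $S_0$ by pair-weight at least $2$, show by the pendant-undefending argument that $S_0$ dominates every zero-labelled $b_j$ (this is the same core insight the paper relies on, though in the split-graph section it leaves it implicit, stating it explicitly only in the analogous claims of the comb convex and star convex sections), and then charge one extra red vertex per positively-labelled undominated $b_j$ against $\sum_j f(b_j)$, obtaining $\vert S\vert\leq w(f)-\vert R\vert\leq k$ in a single count. Your charging argument buys robustness: it needs no minimality assumption (the paper's ``without loss of generality $w(g)=\gamma_r(G)$'' — a typo for $\gamma_r(H)$ — and its exchange step are the more delicate parts of its proof) and avoids re-verifying that each reconfigured $g'$ is again a WRD function; the paper's normalization, in exchange, yields an explicit canonical WRD function with all weight on $R\cup R'$, making the final extraction of $S$ immediate. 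One last remark: your standing assumption that every $b_j$ has a neighbour in $R$ is genuinely needed (an isolated blue vertex makes $(G,k)$ a ``no'' instance while $(H,k+\vert R\vert)$ can still be a ``yes'' instance), and this hypothesis is left entirely implicit in the paper, so your flagging it is a point in your favour rather than a deviation.
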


\begin{proof}
        Given that $(G=(R\cup B,E),k)$ is a ``yes" instance of the DECIDE-RBD problem, implying the existence of a set $S\subseteq R$ with $N(S)=B$ and $\vert S\vert\leq k$, we can construct a WRD function $f:V(H)\rightarrow \{0,1,2\}$ as follows:

        \begin{enumerate}
            \item For each vertex $u$ in the set $R\setminus S$, set $f(u)=1$.
            \item For each vertex $u$ in the set $S$, set $f(u)=2$.
            \item For all other vertices $u$ in $H$, set $f(u)=0$.
        \end{enumerate}
        
        This construction guarantees that $f$ satisfies the properties of a WRD function and $w(f)\leq k+\vert R\vert$. Consequently, we have established that $(H,k+\vert R\vert)$ is a ``yes" instance of  the DECIDE-WRD problem.

%%%%%%%%%%%%%%%%%%%%%%%%%%%%%%%%%%%%%%%%%%%%%

        Conversely, let $(H,k+\vert R\vert)$ be a ``yes" instance of the DECIDE-WRD problem, which implies the existence of a WRD function $g$ with $w(g)\leq k+\vert R\vert$. Without loss of generality we assume that $w(g)=\gamma_r(G)$.

        Now, it can be observed that for each vertex $r_i\in R$, $g(r_i)+g(r_i')\geq 1$. Additionally, if we have $g(b_i)\geq 1$ for a vertex $b_i\in B$, then there must exist at least one vertex $r_j\in N_R(b_i)$ such that $g(r_j)+g(r_j')= 1$; as if $g(r_j)+g(r_j')\geq 2$, we may assume $g(r_j)=2$ and $g(r_j')=0$ for every $r_j\in N_R(b_i)$. In that case we can define a WRD function $f$, with $f(b_i)=0$ and $f(v)=g(v)$ for every $v\in V(H)\setminus \{b_i\}$. This implies $w(f)<w(g)$, leading to a contradiction.

        We can leverage this observation to construct another WRD function $g'$ as follows:
        \begin{enumerate}
            \item Pick a vertex $b_i\in B$ with non zero $g'$ label. Set $g'(r_j)=2$ and $g'(r_j')=0$ for a vertex $r_j\in N_R(b_i)$ with $g(r_j)+g(r_j')= 1$.
            \item Set $g'(b_i)=0$ for the vertex $b_i$ associated with the above condition.
            \item For all other vertices $u$ in $H$, set $g'(u)=g(u)$.
        \end{enumerate}
        This construction ensures that $g'$ is a WRD function of same weight as $g$. By repeating this process a finite number of times, we can conclude that there exists a WRD function $f$ with the same weight as $g$, such that $f(B)=0$, and for each vertex $b_i\in B$, there exists a vertex $r_j\in N_R(b_i)$ such that $f(r_j)+f(r_j')= 2$.

        Hence, the set $S=\{r_j\in R~\vert~f(r_j)+f(r_j')=2\}$ dominates the set $B$, that is $N(S)=B$ and $\vert S\vert \leq k$, implying that $(G,k)$ is indeed a ``yes" instance of the DECIDE-RBD problem.
\end{proof}

Hence, by Theorem \ref{th:3}, it can be concluded that the DECIDE-WRD problem is NP-complete for split graphs.

\section{Approximation results}\label{SEC:7}
To the best of our knowledge, there does not exist any approximation result in the literature for the MIN-WRD problem. In this section, we give some approximation results. 
First, we define the minimum dominating set problem.
\\

\noindent\underline{\textsc{Minimum Dominating Set} problem}
\begin{enumerate}
  \item[] \textbf{Instance}: A graph $G=(V,E)$.
  \item[] \textbf{Solution}: A subset $S$ of $V(G)$ with $\vert S\vert =\gamma(G)$.
\end{enumerate}

The following theorems are from previous literature.

\begin{thm}\cite{DBLP:books/daglib/Cormen}\label{th:cormen}
    The MINIMUM DOMINATION SET problem in a graph with maximum degree $\Delta$ can be approximated with an approximation ratio of $1 + ln(\Delta + 1)$.
\end{thm}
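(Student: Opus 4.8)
The plan is to reduce the MINIMUM DOMINATING SET problem to the MINIMUM SET COVER problem and then invoke the classical analysis of the greedy set cover algorithm. First I would form the set cover instance whose ground set is $V$ and whose collection of sets is $\{N[v]~\vert~v\in V\}$, the family of closed neighborhoods. A subset $S\subseteq V$ is a dominating set of $G$ precisely when the corresponding closed neighborhoods $\{N[v]~\vert~v\in S\}$ cover $V$, so the two problems have identical optimal values and a solution to one translates directly into a solution of the other. The key structural observation is that each set $N[v]$ has cardinality $deg(v)+1\leq \Delta+1$, so every set in the constructed instance has size at most $\Delta+1$.

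Next I would run the greedy algorithm: repeatedly select the vertex whose closed neighborhood covers the largest number of still-uncovered vertices, until all of $V$ is covered. To bound its performance I would use the standard charging argument. Suppose the greedy picks sets $S_1,S_2,\ldots,S_k$ in this order; to each vertex $x$ assign a cost $c_x=1/\vert S_i\setminus(S_1\cup\cdots\cup S_{i-1})\vert$, where $S_i$ is the first chosen set containing $x$. Then the number of greedy sets equals $\sum_{x\in V}c_x$, since each newly covered element receives total cost exactly $1$ at the step that covers it.

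The heart of the argument, and the step I expect to be the main obstacle, is showing that for every single set $T$ in the collection $\sum_{x\in T}c_x\leq H(\vert T\vert)$, where $H(d)=\sum_{i=1}^{d}1/i$ is the $d$-th harmonic number. This follows by ordering the elements of $T$ by the time they get covered and observing that, when the greedy is about to cover the $j$-th not-yet-covered element of $T$, the set it actually selects covers at least as many new elements as $T$ still has uncovered (because $T$ itself is an available candidate at that moment); this bounds each charged cost by the corresponding harmonic term. Applying this inequality to every set in an optimal cover $\mathcal{C}^\ast$ gives $k=\sum_{x\in V}c_x\leq \sum_{T\in\mathcal{C}^\ast}\sum_{x\in T}c_x\leq \vert\mathcal{C}^\ast\vert\cdot\max_T H(\vert T\vert)$. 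Finally, since every set has size at most $\Delta+1$ and $H(d)\leq 1+\ln d$, I would conclude that the greedy solution has size at most $(1+\ln(\Delta+1))\cdot\gamma(G)$, which establishes the claimed approximation ratio.
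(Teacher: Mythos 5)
Your proof is correct and is precisely the standard argument behind this theorem, which the paper itself does not prove but only cites from \cite{DBLP:books/daglib/Cormen}: reduce domination to set cover via the closed neighborhoods $N[v]$ (each of size at most $\Delta+1$), run greedy, and apply the classical charging analysis. The key bound $\sum_{x\in T}c_x\leq H(\vert T\vert)$ and the estimate $H(\Delta+1)\leq 1+\ln(\Delta+1)$ are exactly the textbook treatment, so nothing is missing or different in substance.
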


In \cite{DBLP:books/daglib/Cormen}, a $1+ln(\Delta+1)$ factor approximation algorithm is given for the MINIMUM DOMINTING SET problem. Let us name this approximation algorithm as \emph{APPROX-DOM}. Next, we show that the MIN-WRD problem can also be approximated within a logarithmic factor. 

\begin{thm}\label{th:approx}
     The MIN-WRD problem in a graph with maximum degree $\Delta$ can be approximated with an approximation ratio of $2(1 + ln(\Delta + 1))$
\end{thm}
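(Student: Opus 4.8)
The plan is to reduce the approximation of MIN-WRD to the approximation of the minimum dominating set problem, exploiting the bound $\gamma(G)\le\gamma_r(G)$ together with the algorithm APPROX-DOM guaranteed by Theorem \ref{th:cormen}. The key structural fact I would establish first is that any dominating set can be converted into a WRD function at the cost of a factor of $2$: given a dominating set $S$ of $G$, define $f$ by $f(v)=2$ for every $v\in S$ and $f(v)=0$ otherwise, so that $w(f)=2\vert S\vert$.

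The heart of the argument is to verify that this $f$ is indeed a WRD function. Every vertex $u$ with $f(u)=0$ has, since $S$ dominates $G$, a neighbour $v\in S$ with $f(v)=2$, so no vertex is initially undefended. When such a $u$ is attacked, move one legion from $v$ to $u$, producing $f'$ with $f'(u)=f'(v)=1$ and $f'(w)=f(w)$ for all other $w$. I would then check that $f'$ leaves nothing undefended: every remaining vertex of value $0$ is still adjacent to some vertex of $S$, and although $v$'s value dropped from $2$ to $1$ it is still positive, so any vertex that relied on $v$ for its defense remains defended. Hence $f$ is a WRD function of weight $2\vert S\vert$.

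With this conversion in hand, the algorithm is simply to run APPROX-DOM to obtain a dominating set $S$ with $\vert S\vert\le(1+\ln(\Delta+1))\gamma(G)$, and then output the function $f$ described above. Its weight satisfies $w(f)=2\vert S\vert\le 2(1+\ln(\Delta+1))\gamma(G)$. Chaining this with the inequality $\gamma(G)\le\gamma_r(G)$ from \cite{weak_rom_1} yields $w(f)\le 2(1+\ln(\Delta+1))\gamma_r(G)$, which is exactly the claimed approximation ratio. The whole procedure runs in polynomial time, since APPROX-DOM does and the conversion from $S$ to $f$ is linear.

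I expect the only genuine obstacle to be the correctness check in the second paragraph, namely confirming that a single legion move from a value-$2$ vertex of a dominating set can never create an undefended vertex; the point is that the source vertex retains a positive value after the move, so domination is preserved. Everything else is routine inequality chaining against the known approximation guarantee and the bound $\gamma(G)\le\gamma_r(G)$.
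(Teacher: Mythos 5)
Your proposal is correct and follows essentially the same route as the paper: apply APPROX-DOM, assign value $2$ to every vertex of the resulting dominating set, and chain $w(f)=2\vert S\vert\leq 2(1+\ln(\Delta+1))\gamma(G)\leq 2(1+\ln(\Delta+1))\gamma_r(G)$. The only difference is that you spell out the verification that $f$ is a WRD function (the source vertex keeps a positive value after a legion moves), which the paper leaves as ``easy to observe''.
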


\begin{proof}
    Given a graph $G=(V,E)$ with maximum degree $\Delta$, we apply APPROX-DOM on $G$ and get a dominating set $S$. Clearly $\vert S\vert \leq (1+ln(\Delta+1))\gamma(G)$. Now we define $f:V\to \{0,1,2\}$ as: $f(u)=2$ for each $u\in S$ and $f(u)=0$, otherwise. It is easy to observe that $f$ is a weak roman dominating function and $w(f)= 2\vert S\vert\leq 2(1+ln(\Delta+1))\gamma(G)\leq 2(1+ln(\Delta+1))\gamma_r(G)$. Hence we get a polynomial time $2(1+ln(\Delta+1))$-factor approximation algorithm for the MIN-WRD problem.
\end{proof}

Hence by Theorem \ref{th:approx} it can be concluded that the MIN-WRD problem is in APX for bounded degree graphs. Now we show the APX-hardness of the problem. Before showing the APX-hardness we state the definition of an $L$-reduction.

Given two NP optimization problems $F$ and $G$ and a polynomial time transformation $\mathcal{A}$ from instances of $F$ to instances of
$G$, we say that $\mathcal{A}$ is an \emph{$L$-reduction} if there are positive constants $\alpha$ and $\beta$ such that for every instance $x$ of $F$
\begin{enumerate}
    \item $OPT_G(h(x)) \leq \alpha OPT_F (x)$.
    \item For every feasible solution $y$ of $h(x)$ with objective value $m_G(h(x), y) = c_2$ we can find in polynomial time a solution $y'$ of $x$ with $m_F(x, y') = c_1$ such that $\vert OPT_F(x) - c_1\vert \leq \beta\vert OPT_G(h(x)) - c_2\vert$.
\end{enumerate}

\begin{thm}
    The MIN-WRD problem is APX-hard for graphs with maximum degree $4$.
\end{thm}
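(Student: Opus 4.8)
The plan is to exhibit an $L$-reduction from \textsc{Minimum Dominating Set} restricted to graphs of maximum degree $3$ --- a problem that is APX-hard even under this degree restriction~\cite{chlebik2008} --- to \textsc{MIN-WRD} on graphs of maximum degree $4$; membership in APX for this degree bound is already guaranteed by Theorem~\ref{th:approx}. Given a graph $G=(V,E)$ with $\Delta(G)\le 3$ and $|V|=n$, I would build $G'$ by attaching to every vertex $v\in V$ a pendant path on three vertices $v,y_v,z_v$ (with the edges $vy_v$ and $y_vz_v$). Each original vertex gains exactly one new neighbour, so $\Delta(G')\le 4$, every $y_v$ has degree $2$, and every $z_v$ is a leaf; the transformation is plainly polynomial.

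The heart of the argument is the exact identity $\gamma_r(G')=n+\gamma(G)$, which I would establish in two directions. For the upper bound, take a minimum dominating set $D$ of $G$ and set $f(y_v)=1$ for every $v$, $f(v)=1$ for $v\in D$, and $f=0$ elsewhere, giving weight $n+\gamma(G)$; one checks this is a WRD function, since an attack on a leaf $z_v$ is repelled by the legion at $y_v$ (the vertex $v$, if labelled $0$, remains defended by a dominating neighbour in $D$), and an attack on an undominated original vertex $v$ is repelled by a legion moved from its neighbour $u\in D$, which stays safe because its own $y_u$ still carries a legion. For the lower bound, let $f$ be any WRD function and set $P=\{v\in V: f(v)\ge 1\}$ and $A=\{v\in V: f(y_v)+f(z_v)\ge 2\}$. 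Defending $z_v$ forces $f(y_v)+f(z_v)\ge 1$ for all $v$; moreover, if an original vertex $v$ with $f(v)=0$ has no positive neighbour inside $G$, then the only legion able to repel an attack on $v$ sits at $y_v$, and moving it must not expose $z_v$, which a short case analysis on the weak-Roman move shows forces $f(y_v)+f(z_v)\ge 2$, i.e.\ $v\in A$. Hence $P\cup A$ is a dominating set of $G$, and summing labels gives
\[
w(f)\;\ge\;|P|+\bigl(n+|A|\bigr)\;\ge\;n+\gamma(G),
\]
using $|P|+|A|\ge|P\cup A|\ge\gamma(G)$.

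These same estimates yield both $L$-reduction conditions. Since $\Delta(G)\le 3$ implies $\gamma(G)\ge n/4$, we get $\mathrm{OPT}_{\mathrm{WRD}}(G')=n+\gamma(G)\le 5\,\gamma(G)$, so the first condition holds with $\alpha=5$. For the second condition, from any WRD function $f$ of $G'$ of weight $c_2$ I would output the dominating set $D_f:=P\cup A$ defined above; the weight bound gives $|D_f|\le|P|+|A|\le w(f)-n=c_2-n$, while $|D_f|\ge\gamma(G)$ because $D_f$ dominates $G$. Therefore $\bigl|\,|D_f|-\gamma(G)\,\bigr|\le (c_2-n)-\gamma(G)=c_2-\mathrm{OPT}_{\mathrm{WRD}}(G')=\bigl|\mathrm{OPT}_{\mathrm{WRD}}(G')-c_2\bigr|$, the required inequality with $\beta=1$. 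Combining the two conditions gives the $L$-reduction and hence APX-hardness at maximum degree $4$.

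I expect the main obstacle to be the lower-bound and solution-lifting step: one must verify carefully that repelling an attack on an original vertex not dominated from within $G$ genuinely forces \emph{two} units of weight into its attached path rather than one, because a single legion at $y_v$ cannot simultaneously protect $z_v$ once it has been displaced. Making this weak-Roman movement analysis watertight, and confirming that the recovered set $P\cup A$ really dominates $G$, is precisely what upgrades a plain many-one reduction into an $L$-reduction achieving the tight additive gap with $\beta=1$.
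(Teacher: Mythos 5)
Your proof is correct, but it takes a genuinely different route from the paper's. The paper reuses the NP-completeness gadget of \cite{weak_rom_1}: it attaches a $P_4$ ($wxyz$) to every vertex of $G$, invokes Lemma 23 of \cite{weak_rom_1} as a black box for the identity $\gamma_r(G')=\gamma(G)+2\vert V(G)\vert$, and then needs a three-case reconfiguration argument (normalizing an arbitrary WRD function so that each pendant path carries weight exactly $2$ in a canonical shape) to extract a dominating set; this yields $\alpha=9$, $\beta=1$. You instead attach only two new vertices $y_v,z_v$ per vertex, prove the identity $\gamma_r(G')=n+\gamma(G)$ from scratch, and replace the reconfiguration step by the cleaner accounting $w(f)\ge \vert P\vert + n + \vert A\vert$ together with the observation that $P\cup A$ dominates $G$. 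I checked the delicate points you flagged: your upper-bound function is indeed a WRD function (an attack on $v\notin D$ is repelled from its dominator $u\in D$, which stays safe via $y_u$, and $y_v$ can never be the mover there since that would expose $z_v$), and your forcing argument is sound — if $f(v)=0$ and $v$ has no positive $G$-neighbour, the only admissible mover is $y_v$, and $f(y_v)=1$, $f(z_v)=0$ would leave $z_v$ undefended after the move, so $f(y_v)+f(z_v)\ge 2$; the contrapositive is exactly what makes $P\cup A$ dominating. Your smaller gadget buys a self-contained argument and a tighter constant ($\alpha=5$ via $\gamma(G)\ge n/4$ for $\Delta\le 3$, versus the paper's $\alpha=9$, both with $\beta=1$), which matters if one wants explicit inapproximability factors; the paper's approach buys brevity by leaning on the published lemma. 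Two cosmetic points: the reference you cite for APX-hardness of degree-$3$ domination is not in this paper's bibliography — the paper uses \cite{DBLP:journals/tcs/AlimontiK00} — and note that the $L$-reduction only needs the two inequalities, not the exact identity, though having it costs you nothing.
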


\begin{proof}
    The MINIMUM DOMINATING SET problem is APX-hard for graphs with maximum degree $3$ \cite{DBLP:journals/tcs/AlimontiK00}. It is enough to construct an $L$-reduction $\mathcal{A}$ from the MINIMUM DOMINATING SET problem to the MIN-WRD problem. Let $G$ be a graph with maximum degree $3$, then we construct $G'$ by adding to each vertex $v\in V(G)$ a path $P_4 (wxyz)$. Clearly, $G'$ is a graph with maximum degree $4$. This reduction in given in \cite{weak_rom_1} to show the NP-completeness of the MIN-WRD problem. We show that this is also an $L$-reduction. 

    Let $D$ be a dominating set of $G$ of cardinality $\gamma(G)$ and $g$ be a WRD function of $G'$ with $w(g)=\gamma_r(G')$. By Lemma $23$ of \cite{weak_rom_1}, $w(g)=\vert D\vert+2\vert V(G)\vert$. Note that $G$ is a graph with maximum degree $3$, hence $\vert V(G)\vert\leq \sum_{v\in D}(deg_G(v)+1)\leq 4\vert D\vert$. So, $w(g)=\vert D\vert+2\vert V(G)\vert\leq 9\vert D\vert$. Hence $\gamma_r(G')\leq 9\gamma(G)$.
    
    Now, Let $f$ be a WRDF of $G'$. Then by the proof of Lemma $23$ in \cite{weak_rom_1}, $f(P_v)\geq 2$, for every $v\in V(G)$. Now three cases may appear.
    
    If $f(P_v)= 3$, and $f(v)=0$ for some $v\in V(G)$, then $f(w)+f(x)+f(y)+f(z)= 3$, then we reconfigure as: $f'(v)=f'(w)=f'(y)=1$, $f(x)=f(z)=0$ and $f'(a)=f(a)$ otherwise. It is easy to observe that $f'$ is a WRDF on $G'$ with $w(f')\leq w(f)$.

    If $f'(P_v)\geq 4$ and $f'(v)=0$ for some $v\in V(G)$, then we reconfigure as: $f''(v)=2$, $f''(w)=f''(y)=1$, $f''(x)=f''(z)=0$ and $f''(u)=f'(u)$ otherwise. It is easy to show that $f''$ is a WRDF on $G'$, with $w(f'')\leq w(f')$.
    
    If $f''(P_v)=2$, then $f''(v)=0$. If $f''(x)=1$, then $f''(w)=0$, which implies that $v$ is adjacent to some $u\in (V_1\cup V_2)\cap V(G)$. If $f''(x)=0$, then clearly $f''(w)=1$ and $f''(v)=0$. But an attack on $x$ can only be defended by moving the legion from $w$ to $x$. This implies $v$ is defended by some $u\in (V_1\cup V_2)\cap V(G)$.

    Now let $S=(V_1\cup V_2)\cap V(G)$, from the above cases, it can be concluded that $S$ is a dominating set of $G$. Then $\vert S\vert \leq w(f'')-2\vert V(G)\vert\leq w(f)-2\vert V(G)\vert$. Hence $\vert S\vert-\vert D\vert\leq w(f)-2\vert V(G)\vert-(w(g)-2\vert V(G)\vert)\leq w(f)-w(g)$. As a result, $\mathcal{A}$ is an $L$-reduction with $\alpha=9$ and $\beta=1$. Hence the theorem follows.
\end{proof}

\section{Conclusion}
\label{SEC:8}
This work presents an extension of the existing algorithmic research conducted on the weak roman domination problem. The NP-completeness of the problem has been proved for specific classes of graphs. 
%However, we have shown that the problem is polynomial-time solvable for star convex bipartite graphs of bounded degree, even though the problem is NP-complete for star convex bipartite graphs and bipartite graphs of bounded degrees. 
Also, a polynomial-time algorithm has been proposed for solving the problem in the context of $P_4$-sparse graphs, a well-known extension of the class of cographs. Later we have shown that the problem admits a $2(1+ln(\Delta+1))$ approximation algorithm for graphs with maximum degree $\Delta$. We have also shown that the problem is APX-complete for graphs with maximum degree $4$. It would be of interest to investigate the computational complexity of the problem for other graph classes, such as distance hereditary graphs, permutation graphs, convex bipartite graphs etc. 

 \bibliographystyle{elsarticle-num} 
 \bibliography{Arxiv_version}

%% else use the following coding to input the bibitems directly in the
%% TeX file.

% \begin{thebibliography}{00}

% %% \bibitem{label}
% %% Text of bibliographic item

% \bibitem{}

% \end{thebibliography}
\end{document}